\documentclass[accepted]{uai2025} % for initial submission
%\documentclass[accepted]{uai2025} % after acceptance, for a revised version; 
% also before submission to see how the non-anonymous paper would look like 
                        
%% There is a class option to choose the math font
% \documentclass[mathfont=ptmx]{uai2025} % ptmx math instead of Computer
                                         % Modern (has noticeable issues)
% \documentclass[mathfont=newtx]{uai2025} % newtx fonts (improves upon
                                          % ptmx; less tested, no support)
% NOTE: Only keep *one* line above as appropriate, as it will be replaced
%       automatically for papers to be published. Do not make any other
%       change above this note for an accepted version.

%% Choose your variant of English; be consistent
\usepackage[american]{babel}
% \usepackage[british]{babel}

%% Some suggested packages, as needed:
\usepackage{natbib} % has a nice set of citation styles and commands
\usepackage{mathtools} % amsmath with fixes and additions
\usepackage{booktabs} % commands to create good-looking tables
\usepackage{tikz} % nice language for creating drawings and diagrams
\usepackage{xspace}

\usepackage{titlesec}
\usepackage{titletoc}

\usepackage[utf8]{inputenc} % allow utf-8 input
\usepackage[T1]{fontenc}    % use 8-bit T1 fonts
\usepackage{hyperref}       % hyperlinks
\usepackage{url}            % simple URL typesetting
\usepackage{booktabs}       % professional-quality tables
\usepackage{amsfonts}       % blackboard math symbols
\usepackage{nicefrac}       % compact symbols for 1/2, etc.
\usepackage{microtype}      % microtypography
\usepackage{xcolor}         % colors
\usepackage{amsmath}
\usepackage{graphicx}
\usepackage{cleveref}
\usepackage{algorithm}
\usepackage{amsthm}
\newtheorem{theorem}{Theorem}[section]  
\newtheorem{assumption}{Assumption}
\newtheorem{lemma}{Lemma}

\bibliographystyle{plainnat}
\usepackage{thmtools}
\usepackage{thm-restate}
\usepackage[T1]{fontenc}

\newtheorem{definition}{Definition}[section] % 
\newtheorem{proposition}{Proposition}[section] 
\usepackage{amsthm}

\usepackage{amsmath,amssymb}
\usepackage{algorithm}
\usepackage{algpseudocode}  % or {algorithmic} if you prefer

\newcommand{\ours}{\textsc{DiffOracle}\xspace}
%% Provided macros
% \smaller: Because the class footnote size is essentially LaTeX's \small,
%           redefining \footnotesize, we provide the original \footnotesize
%           using this macro.
%           (Use only sparingly, e.g., in drawings, as it is quite small.)

%% Self-defined macros
 % just an example

\title{Robust Optimization with Diffusion Models for Green Security}

% The standard author block has changed for UAI 2025 to provide
% more space for long author lists and allow for complex affiliations
%
% All author information is authomatically removed by the class for the
% anonymous submission version of your paper, so you can already add your
% information below.
%
% Add authors
\author[1]{\href{lingkaikong@g.harvard.edu}{Lingkai Kong}{}}
\author[1]{Haichuan Wang}
\author[1]{Yuqi Pan}
\author[1]{Cheol Woo Kim}
\author[1]{Mingxiao Song}
\author[1]{Alayna Nguyen}
\author[1]{\\Tonghan Wang}
\author[2]{Haifeng Xu}
\author[1]{Milind Tambe}
% Add affiliations after the authors
\affil[1]{%
    Harvard University
}
\affil[2]{%
    University of Chicago

}

  \begin{document}
\maketitle

\begin{abstract}

In green security, defenders must forecast adversarial behavior—such as poaching, illegal logging, and illegal fishing—to plan effective patrols. These behaviors are often highly uncertain and complex. Prior work has leveraged game theory to design robust patrol strategies to handle uncertainty, but existing adversarial behavior models primarily rely on Gaussian processes or linear models, which lack the expressiveness needed to capture intricate behavioral patterns.  To address this limitation, we propose a conditional diffusion model for adversary behavior modeling, leveraging its strong distribution-fitting capabilities. To the best of our knowledge, this is the first application of diffusion models in the green security domain.  
Integrating diffusion models into game-theoretic optimization, however, presents new challenges, including a constrained mixed strategy space and the need to sample from an unnormalized distribution to estimate utilities. To tackle these challenges, we introduce a mixed strategy of mixed strategies and employ a twisted Sequential Monte Carlo (SMC) sampler for accurate sampling.  Theoretically, our algorithm is guaranteed to converge to an \(\epsilon\)-equilibrium with high probability using a finite number of iterations and samples. Empirically, we evaluate our approach on both synthetic and real-world poaching datasets, demonstrating its effectiveness.

\end{abstract}

\section{Introduction}
% \lk{Green security, attacker behavior generation --> Diffusion model is powerful, multi-modal distribution, handling uncertainty. --> Diffusion model might be imperfect because the data is noisy and limited, we propose a robust patrol strategy optimization algorithm with diffusion models.}

In green security, mitigating threats such as illegal logging, illegal fishing, poaching, and environmental pollution requires defenders to anticipate and counteract adversarial behaviors \citep{IJCAI15-fei}. For example, in wildlife conservation, rangers must predict poachers' movements and then strategically allocate patrols to protect endangered species. Over the years, numerous predictive models have been developed \citep{kar2017cloudy, gurumurthy2018exploiting, xu2020stay}, alongside robust patrol optimization methods that leverage game theory to enhance decision-making based on these predictions \citep{pmlr-v161-xu21a}.

% A critical challenge in this domain is predicting the behavior of adversaries to design effective intervention strategies~\citep{IJCAI15-fei}. For instance, in wildlife conservation, rangers must forecast the behavior of poachers to strategically allocate patrols and safeguard endangered species. Accurate behavior prediction is essential for optimizing resource allocation and minimizing environmental harm, particularly in resource-constrained settings where every decision has significant implications.

However, existing adversary predictive models~\citep{kar2017cloudy, gurumurthy2018exploiting}  either lack uncertainty quantification~\citep{kong2023uncertainty, li2023muben, kong2024two} or provide only parameterized predictive distributions with limited expressiveness \citep{xu2020stay}. In reality, adversarial behaviors—such as those of poachers—are high-dimensional and highly complex, driven by diverse motivations, constraints, and strategies. Capturing the full extent of uncertainty is particularly challenging in the strategic environments, where conventional models may struggle to account for the variability of real-world threats.

In this work, we propose using diffusion models to capture adversarial behavior in green security. Diffusion models are a powerful framework for modeling complex, high-dimensional, non-parametric distributions, and they have been successfully applied to image modeling~\citep{ho2020denoising, rombach2021high}, video generation~\citep{ho2022video}, and time-series forecasting~\citep{yang2024survey}. By iteratively refining samples through a denoising process, diffusion models can generate diverse and plausible scenarios, offering a more comprehensive representation of potential attacker strategies. To the best of our knowledge, ours is the first attempt to apply diffusion models in green security.

To enhance the robustness of our approach against potential errors in the learned diffusion model (arising from noisy data, limited sample sizes, or imperfect network training), we assume the attacker's true mixed strategy lies within a KL-divergence ball centered around the learned model distribution. We then optimize for the worst-case expected utility within this constrained space. This formulation naturally gives rise to a two-player zero-sum game: while a defender aims to maximize the expected utility, a nature adversary selects the mixed strategy from the KL ball to minimize it.

This game-theoretic formulation involving diffusion models introduces new technical challenges that have not been addressed in the literature.
First, the KL-divergence constraint on the adversary's mixed strategy prevents the direct application of the standard double oracle method. To resolve this, we shift the constraint from the mixed strategy space to the pure strategy space, treating the original mixed strategy as a pure strategy and introducing a ``mixed strategy over mixed strategies.'' This reformulation yields a more tractable optimization problem. Another challenge arises from the need to sample from a reweighted version of the diffusion model to estimate utilities. To address this, we employ twisted sequential Monte Carlo (SMC) sampling, ensuring asymptotic correctness when evaluating the relevant expected utilities.

% To solve this continuous game, we employ the double oracle algorithm and introduce the concept of a mixed strategy over mixed strategies. This formulation allows us to effectively handle the multimodal and uncertain nature of adversarial behavior.

% \ak{Perhaps we can add a paragraph before the summary of contributions below that wraps up the introduction at a high level. For example, ``By addressing the aforementioned challenges, our work introduces a powerful framework that combines the expressive power of diffusion models with the robustness of game-theoretic models. This framework not only enhances predictive accuracy in poaching behavior but also provides a principled way to generate robust patrol strategies against adversarial behavior. Moreover, our approach may be extended beyond green security applications, offering insights into broader strategic decision-making problems under uncertainty''}

Our contributions are as follows:  
(1) \textbf{Novel Adversary Modeling:} We are the first to leverage diffusion models for modeling adversarial behavior in green security domains.  
(2) \textbf{Robust Optimization with Diffusion Model Framework:} We propose \ours to mitigate imperfections in learned adversary models by introducing a double oracle algorithm that efficiently computes robust mixed patrol strategies. 
(3) \textbf{Theoretical Guarantees:} We prove that our method converges to an \(\epsilon\)-equilibrium with high probability under a finite number of iterations and a finite number of samples. (4) \textbf{Empirical Performance:} We empirically evaluate our method on both synthetic and real-world poaching data.
\section{Related Works}

% First, they consider sequential patrol planning while we address patrol planning for a single timestep, leading us to use different criteria for robustness.

\textbf{Diffusion Models}
Diffusion models have achieved remarkable success across various generative modeling tasks, including image generation~\citep{song2021scorebased, ho2020denoising}, decision-making~\citep{kong2024diffusion, kong2025composite}, and scientific discovery~\citep{gruver2024protein, watson2023novo, kong2023autoregressive}. These models are particularly adept at capturing complex, high-dimensional distributions, making them a powerful tool for diverse applications. Conditional diffusion models extend this capability by integrating contextual information to guide the generative process. By conditioning on textual descriptions, semantic masks, or other relevant features, these models enable tasks such as text-to-image generation~\citep{saharia2022photorealistic}, image-to-image translation~\citep{saharia2021palette}, and time series forecasting~\citep{shen2023non}.

% Recent research has explored the use of diffusion models to model environment transitions in sequential decision-making \cite{ding2024diffusion, jackson2024policy, rigter2024world}—a direction often described as learning a “world model.” However, our work takes a different angle by focusing on the limitations of learned diffusion models to enhance the robustness in the non-seuqntial decision-making setting.

\textbf{Double Oracle for Robust Optimization}
Prior work has framed robust optimization as a two-player zero-sum game~\citep{mastin2015randomized, gilbert2017double}, where the optimizing player selects a potentially randomized feasible strategy, while an adversary chooses problem parameters to maximize regret. The double oracle (DO) algorithm is a standard method for computing equilibria in such games ~\citep{mcmahan2003planning,adam2021double} and has been applied to robust influence maximization in social networks~\citep{wilder2017uncharted}, robust patrol planning~\citep{pmlr-v161-xu21a}, robust submodular optimization~\citep{wilder2018equilibrium}, and robust policy design for restless bandits~\citep{killian2022restless}. However, these applications restrict the uncertainty set to a compact interval. In contrast, our problem involves a diffusion model that provides full distribution-level predictions, making the uncertainty set a space of distributions, which introduces new theoretical challenges in applying double oracle.

\textbf{Distributionally Robust Optimization}
Our work is also closely related to Distributionally Robust Optimization (DRO) \citep{rahimian2019distributionally}, which seeks to find robust solutions by optimizing for the worst-case scenario over a set of plausible distributions, known as the ambiguity set. This framework is particularly effective for handling uncertainty and distributional shifts in optimization objectives or constraints. DRO has seen widespread application in areas such as supply chain management \citep{ash2022distributionally}, finance \citep{kobayashi2023cardinality}, and machine learning \citep{madry2018towards, Sagawa*2020Distributionally}, where resilience to data perturbations is critical. However, most existing DRO methods focus on identifying a single pure strategy, which  is dangerous in the green security setting that adversaries can learn to anticipate and exploit. To address this, we propose a game-theoretic approach that derives a mixed strategy for the defender, leveraging randomness to enhance unpredictability and bolster robustness against adversarial exploitation.

% yield deterministic strategies, which adversaries can anticipate and exploit. \ak{To a pure DRO person who is not familiar with game theory, this statement could be slightly confusing. How about something like, ``However, existing DRO methods typically yield a single robust decision, which in our setting translates to a deterministic strategy for the ranger rather than a mixed strategy. This predictability allows adversaries to anticipate and exploit the ranger’s actions''} To mitigate this, we propose a game-theoretic approach that derives a mixed strategy for the defender, introducing randomness to enhance unpredictability and robustness against adversarial exploitation.

\textbf{Green Security Games}
Green Security Games (GSGs) use game-theoretic frameworks to safeguard valuable environmental resources from illegal activities such as poaching and illegal fishing \citep{IJCAI15-fei, hasan2022evaluation}. In these settings, a resource-limited defender protects expansive, spatially distributed areas against attackers with bounded rationality. Prior work focused on forecasting poaching behaviors \citep{gurumurthy2018exploiting, moore2018ranger}, learning attacker behavior models from data \citep{nguyen2016capture, gholami2018adversary, xu2020stay}, designing patrol strategies \citep{IJCAI15-fei, GameSec17-haifeng}, and balancing data collection with poaching detection \citep{Xu2020DualMandatePM}.

Among existing studies, \citet{pmlr-v161-xu21a} is most closely related to ours, as it also employs a double oracle method to design robust patrolling strategies. However, our approach differs in two key ways. First, we are the first to use diffusion models to predict poaching behavior, addressing the limited expressiveness of the linear approach in \citet{pmlr-v161-xu21a}. Second, while \citet{pmlr-v161-xu21a} focuses on minimax regret with interval-shaped uncertainty sets, our work adopts a distributionally robust optimization objective.

% In the typical setting, one player 

% , and double oracle is one of the common framework used to 
% There have been many literatures taking a game-theoretical review to robust optimization, and double oracle is a popular framework to address that robust optimization. 

% a line of literature which takes a game-theoretical view to robust optimization, and double oracle is one of the popular framework used in those literatures. 

% Double oracle (DO) is an algorithm that is used to solve continuous games with both finite strategy space and continuous strategy space. By iteratively generating strategies for both players, the algorithm is guaranteed to converge to the equilibrium \cite{adam2021double}. Besides robust planning in green security \cite{pmlr-v161-xu21a}, DO has also been used as a framework for robust optimization \cite{gilbert2017double}, \cite{wilder2018equilibrium} and designing robust policies for restless bandit problem \cite{killian2022restless}.

\section{Preliminaries on Diffusion Model}
\label{sec:diffusion}

% \subsection{Modeling Attacker Behavior with Conditional Diffusion Models}

% In green security, capturing the potentially multimodal nature of poacher behavior is crucial for robust patrol planning. To model the poacher’s response \(\mathbf{z}\) given contextual features \(\mathbf{c}\) (e.g., prior patrol effort, terrain), we employ a \emph{conditional} diffusion model, \(p_\theta(\mathbf{z} \mid \mathbf{c})\).

%real poacher response (e.g., a historical poaching record)

%$q(\mathbf{z}^t \mid \mathbf{z}^{t-1}) = \mathcal{N}\!\bigl(\mathbf{z}^t;\, \sqrt{1-\beta_t}\,\mathbf{z}^{t-1},\, \beta_t \mathbf{I}\bigr),$ 
A diffusion model \citep{sohl2015deep} is a generative framework composed of two stochastic processes: a \emph{forward} process that progressively adds Gaussian noise to real data, and a \emph{reverse} (or denoising) process that learns to remove this noise step by step. Formally, let \(\mathbf{z}^0 \sim \mathcal{D}\) be a sample from the training dataset.\footnote{We use \(\mathbf{z}^0\) and \(\mathbf{z}\) interchangeably when there is no ambiguity.} The forward diffusion process can be written as
$q(\mathbf{z}^t \mid \mathbf{z}^{t-1}) 
= \mathcal{N}\!\bigl(\mathbf{z}^t;\,\mathbf{z}^{t-1},\,\beta^2 \mathbf{I}\bigr),$
where \(\beta^2\) is the noise variance at each step \(t=1,\dots,T\). As \(T\) becomes large, repeated noising transforms the data distribution into (approximately) pure Gaussian noise:
$q(\mathbf{z}^T) \approx \mathcal{N}(\mathbf{0},\,T \beta^2 \mathbf{I}).$

\textbf{Score-based Approximation.} To invert this process (i.e., to denoise and recover samples from the original data distribution), one can approximate the reverse transition 
\(
q(\mathbf{z}^{t-1} \mid \mathbf{z}^t) 
\)
via the \emph{score function}, \(\nabla_{\mathbf{z}^t} \log q(\mathbf{z}^t)\) when $\beta$ is small. Specifically,
\[
q(\mathbf{z}^{t-1} \mid \mathbf{z}^t) 
\,\approx\, 
\mathcal{N}\!\Bigl(\mathbf{z}^{t-1};\,
\mathbf{z}^t 
+ \beta^2 \,\nabla_{\mathbf{z}^t}\! \log q(\mathbf{z}^t),\,
\beta^2 \mathbf{I}\Bigr).
\]
Here, \(q(\mathbf{z}^t) = \int q(\mathbf{z}^0)\,q(\mathbf{z}^t \mid \mathbf{z}^0)\, d\mathbf{z}^0\), and the gradient \(\nabla_{\mathbf{z}^t}\! \log q(\mathbf{z}^t)\) points toward regions of higher data density. In practice, we do not know \(q(\mathbf{z}^t)\) in closed form, so a neural \emph{score network} \(s_{\theta}(\mathbf{z}^t, t)\) is trained to approximate this gradient via \emph{denoising score matching} \citep{vincent2011connection, ho2020denoising}. Consequently, the learned reverse (denoising) transition becomes
\[
p_{\theta}(\mathbf{z}^{t-1} \mid \mathbf{z}^t) 
= \mathcal{N}\!\Bigl(\mathbf{z}^{t-1};\,
\mathbf{z}^t 
+ \beta^2\, s_{\theta}(\mathbf{z}^t, t),\,
\beta^2 \mathbf{I}\Bigr).
\]
Starting from an initial Gaussian sample \(\mathbf{z}^T \sim \mathcal{N}(\mathbf{0},\, T \beta^2 \mathbf{I})\), iterating this reverse process ultimately recovers samples that approximate the original data distribution.

\textbf{Conditional Extension.} This diffusion framework can be naturally extended to include additional context \(\mathbf{c}\). In a \emph{conditional} diffusion model~\citep{ho2021classifier}, the score network becomes \(s_{\theta}(\mathbf{z}^t, t, \mathbf{c})\), so that at each step the denoising is informed by side information such as class labels, textual descriptions, or other relevant features. This conditional approach enables the generation of samples that match not only the learned data distribution but also the specific context \(\mathbf{c}\), making it particularly useful for tasks in which external conditions strongly influence the underlying data generation process.

\section{Problem Formulation}
In green security settings, a defender (e.g., a ranger) patrols a protected area to prevent resource extraction by an attacker (e.g., a poacher or illegal logger). Let \(K\) denote the number of targets—such as \(1 \times 1\) km regions within the protected area—that require protection. The defender must allocate patrol effort across these targets while adhering to a total budget \(B\). Formally, the patrol strategy is represented as \(\mathbf{x} = (x_1, \dots, x_K)\), where \(x_k\) denotes the amount of effort (e.g., patrol hours) assigned to target \(k\). The defender’s strategy is constrained by: $\mathcal{X} = \{\mathbf{x} \in \mathbb{R}^K \mid x_k \geq 0, \forall k, \sum_{k=1}^{K} x_k \leq B\}$, which ensures that all patrol efforts are non-negative and do not exceed the available budget $B$.

\textbf{Attacker Behavior via a Conditional Diffusion Model.} Building on the diffusion-model framework in Section~\ref{sec:diffusion}, we now focus on a poaching scenario, in which an attacker’s behavior can be highly uncertain and multimodal. Let \(\mathbf{z}\) denote the number of snares or traps placed in each \(1 \times 1\) km region, where \(K\) is its dimensionality. Similarly, let \(\mathbf{c}\) represent contextual features, including last month’s patrol effort per region~\citep{pmlr-v161-xu21a,xu2020stay}, distance to the park boundary, elevation, and land cover. We model the attacker’s behavior with a continuous \emph{conditional diffusion model} \(p_{\theta}(\mathbf{z} \mid \mathbf{c})\). Concretely, we treat historical poaching data as samples of \(\mathbf{z}^0\), add noise in a forward process, and learn a reverse (denoising) process conditioned on \(\mathbf{c}\). Once trained, this diffusion model captures how poachers respond to different patrol allocations and environmental factors. By sampling from \(p_{\theta}(\mathbf{z} \mid \mathbf{c})\) for new contexts, we can generate realistic, diverse poaching scenarios that inform patrol strategy design. Table~\ref{table:mse} shows the forecasting results on the real-world poaching data and we can see the \textbf{diffusion model can outperform existing approaches used in green security}. Experimental details are in Appendix.~\ref{sec:exp-details}.

\begin{table}[h]
\footnotesize
\centering
\begin{tabular}{@{}ccc@{}}
\toprule
\textbf{Model} & \textbf{MSE} \\ \midrule
Linear regression  & $24.40$ \\
Gaussian process   & $24.21\pm 0.04$ \\
Diffusion model    & $\mathbf{23.46} \pm 0.07$ \\ \bottomrule
\end{tabular}
\caption{Forecasting performance in terms of mean squared error (MSE) on poaching data.}
\label{table:mse}
\end{table}

\textbf{Robust Patrol Optimization.} In practice, the learned diffusion model may be imperfect due to data noise, limited training samples, or suboptimal network training. Consequently, the learned distribution might not accurately capture the true underlying behavior. To ensure robustness in patrol strategy design, we assume the true distribution lies within a bounded KL-divergence from the learned distribution. We then optimize for the worst-case expected utility over all distributions in that KL-divergence ball, leading to the following formulation:
\begin{align}
 \textstyle &\max_{\pi(\mathbf{x})\in \Delta(\mathcal{X})} \min_{\tau(\mathbf{z})\in \mathcal{T}} 
\quad  \mathbb{E}_{\pi(\mathbf{x})} \mathbb{E}_{\tau(\mathbf{z})} [u(\mathbf{x}, \mathbf{z})]  \nonumber\\
&\mathcal{T} = \left\{ \tau(\mathbf{z}) \mid D_{\rm KL}(\tau(\mathbf{z}) \parallel p_{\theta}(\mathbf{z} \mid \mathbf{c})) \leq \rho \right\},
\label{eq:DRO}
\end{align}
where \(\rho\) is a tolerance parameter specifying how far the true distribution may deviate from the learned distribution.  \(u(\mathbf{x}, \mathbf{z})\) represents the defender’s utility (e.g., the number of animals in the park) for strategy \(\mathbf{x}\) given the adversary’s choice \(\mathbf{z}\) and is assumed to be bounded in \([0, M]\).

Eq.~\eqref{eq:DRO} can be interpreted as a two-player zero-sum game in which the \emph{defender} seeks a robust mixed strategy \(\pi(\mathbf{x})\), while a  \emph{nature adversary} (representing model misspecification) selects \(\tau(\mathbf{z})\) within the KL-divergence ball to minimize the defender’s expected utility. The defender’s pure strategy space is \(\mathcal{X}\), and the adversary’s pure strategy space is \(\mathcal{Z} = \mathrm{Support}(p_{\theta}(\mathbf{z}|\mathbf{c}))\). The defender’s mixed strategy \(\pi(\mathbf{x})\) is a probability distribution over \(\mathcal{X}\), with the corresponding space denoted by \(\Delta(\mathcal{X})\). In contrast, the adversary’s mixed strategy \(\tau(\mathbf{z})\) is in the constained space $\mathcal{T}$.

For notational simplicity, when both players use mixed strategies, the defender’s expected utility is denoted by \(U(\pi, \tau)\). If one player employs a pure strategy and the other a mixed strategy, we write \(U(\mathbf{x}, \tau) \coloneq U(\delta_{\mathbf{x}}, \tau)\).

Note that, unlike standard DRO, where the goal is typically to find a single strategy $\mathbf{x}$, here we aim to identify a \emph{mixed} strategy for the defender. This approach is particularly important in green security settings, as adopting a randomized policy helps prevent predictability. A deterministic patrol strategy could be exploited by adversaries, such as poachers, who can adapt their behavior to bypass predictable patterns. By introducing randomness into the patrol strategy, we increase the difficulty for adversaries to anticipate the ranger's actions, thereby enhancing the overall security and effectiveness of the patrol.

\begin{figure*}[h]
    \centering
\includegraphics[width=0.95\textwidth]{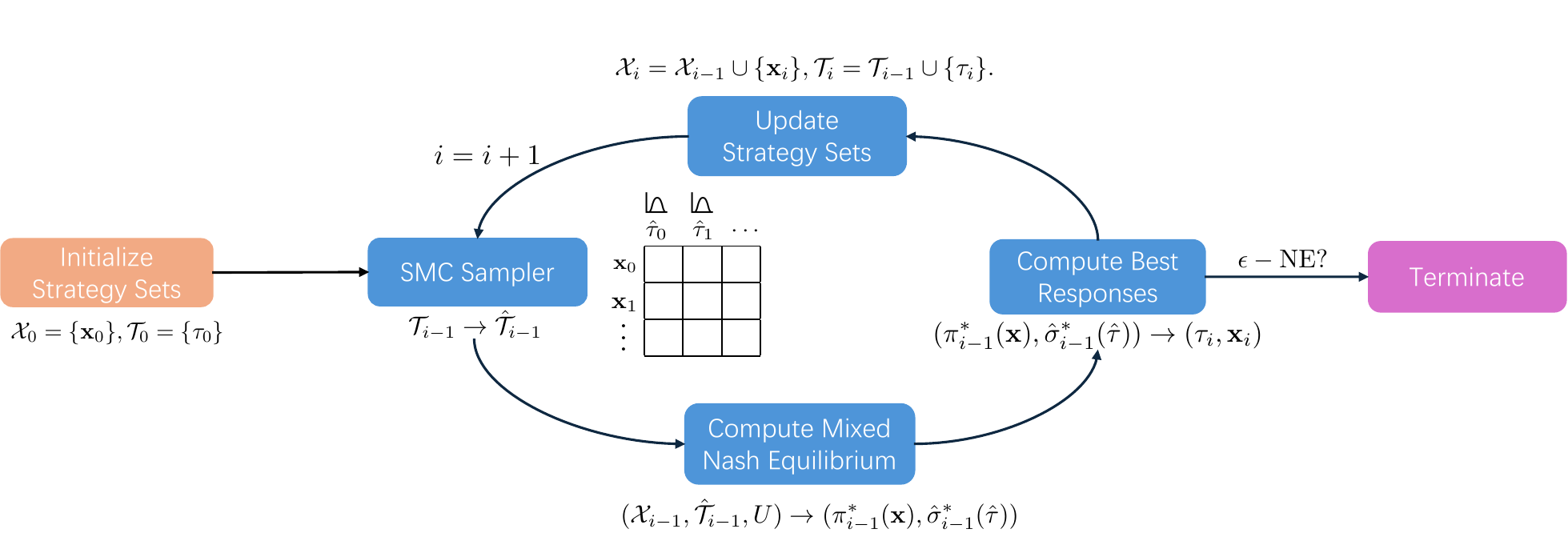} % Adjust width as needed
    \caption{Overview of \ours. We begin by initializing the strategy set for each player. At the \(i\)-th iteration, we use SMC sampler to obtain a set of empirical distributions \(\hat{\mathcal{T}}_{i-1}\). Next, a mixed Nash solver computes the equilibrium \(\pi^*_{i-1}\) and \(\hat{\sigma}^*_{i-1}\). We then compute each player’s best response against the opponent’s mixed strategy and update the players’ strategy sets. This procedure is repeated until convergence.}
    \label{fig:overview}
\end{figure*}

% \begin{figure*}[h]
%     \centering
% \includegraphics[width=0.95\textwidth]{figures/overview.pdf} % Adjust width as needed
%     \caption{Overview of \ours. We begin by initializing the strategy set for each player. At the \(i\)-th iteration, we use SMC sampler to obtain a set of empirical distributions \(\hat{\mathcal{T}}\). Next, a mixed Nash solver computes the equilibrium \(\pi^*\) and \(\hat{\sigma}^*\). We then compute each player’s best response against the opponent’s mixed strategy and update the players’ strategy sets. This procedure is repeated until convergence.}
%     \label{fig:overview}
% \end{figure*}

\section{Robust Optimization with Diffusion Model}

In this section, we propose \ours to solve the robust optimization problem in Eq.~\ref{eq:DRO}. In Section \ref{sec:mixed_over_mixed}, we introduce a mixed strategy over mixed strategies to ensure the applicability of the double oracle approach. Section \ref{sec:double_oracle} details the overall workflow of the algorithm. In Section \ref{thm:twisted_diffusion_sampler}, we present twisted SMC sampler to estimate the expected utility. Finally, in Section \ref{sec:convergence_analysis}, we provide a convergence analysis of \ours.

\subsection{Mixed strategy over mixed strategies}\label{sec:mixed_over_mixed}
Eq.~\eqref{eq:DRO} requires solving for mixed strategies in a continuous game with infinitely many strategies. A common approach for such problems is the double oracle method~\citep{adam2021double}, which iteratively expands both players’ strategy sets and computes the equilibrium of the resulting subgame. This procedure is guaranteed to converge to an equilibrium in any two-player zero-sum continuous game. However, during the double oracle process, the mixed strategy it produces is necessarily a discrete distribution, whereas $p_{\theta}(\mathbf{z}|\mathbf{c})$ is a continuous distribution. As a result, the KL divergence between these two distributions is ill-defined, making it difficult to include the KL-divergence constraint in the subgame-equilibrium computation.

To address this limitation, we note that given a fixed \(\pi(\mathbf{z})\), the inner constrained minimization problem admits a closed-form solution that can be sampled using the diffusion model. This procedure can be interpreted as computing a best-response pure strategy in the double oracle framework. Consequently, we propose viewing the original mixed strategy \(\tau(\mathbf{z})\) as a “pure” strategy and introducing a \emph{mixed strategy over mixed strategies}. This reformulation enables the application of the double oracle method while preserving the desired constraints.

% The key insight is to shift the constraints from the mixed strategy space to the pure strategy space, effectively treating the original mixed strategy as a pure strategy and introducing a \emph{mixed strategy over mixed strategies}. This reformulation allows us to apply the double oracle method while preserving the desired constraints. 

% However, applying the double oracle algorithm directly is infeasible due to the constrained mixed strategy space of the attacker. 

% To address this limitation, we introduce a novel construct: a \textit{mixed strategy over mixed strategies}.

% \begin{definition}[Mixed Strategy over mixed strategy]
%     % Let \( \mathcal{S} \) denote the space of pure strategies for a player, and \( \Delta(\mathcal{S}) \) denote the space of mixed strategies over \( \mathcal{S} \), i.e., the set of all probability distributions over \( \mathcal{S} \). 
%     Let $\Delta$ denote the space of the mixed strategy and $\tilde{p}$ is a mixed strategy in $\Delta$.
%     A \textbf{mixed strategy over mixed strategies}, \( \sigma \), is a probability distribution over \( \Delta \), denoted \( \sigma(\tilde{p}) \), where \( \tilde{p} \in \Delta \). Formally:

%  \( \sigma \in \Delta(\Delta) \), i.e., \( \sigma \) satisfies:
%    (1) \( \sigma(\tilde{\pi}) \geq 0 \) for all \( \tilde{\pi} \in \Delta(\mathcal{S}) \),
%    (2) \( \int_{\Delta(\mathcal{S})} q(\tilde{\pi}) d\tilde{\pi} = 1 \).
% \end{definition}

\begin{definition}[Mixed Strategy over Mixed Strategies]\label{def:mixed_over_mixed}
Let \(\mathcal{T} \) denote the space of mixed strategies, where each \( \tau \in \mathcal{T} \) represents a probability distribution over pure strategies. A \textit{mixed strategy over mixed strategies}, \( \sigma \), is a probability distribution over \( \mathcal{T}\), formally expressed as \( \sigma \in \Delta(\mathcal{T}) \). This implies that \( \sigma \) satisfies the following conditions: (1) \( \sigma(\tau) \geq 0 \) for all \( \tau \in \Delta \), and (2) \( \int_{\mathcal{T}} \sigma(\tau) \, d\tau = 1 \).  
\end{definition}

% \lk{Maybe we should change to mixed strategy over distribution?}
% \ak{Personally, mixed strategy over distributions sounds a little bit more natural to me. Then perhaps add a remark statement that this can be in fact viewed as a mixed strategy over mixed strategies..}
We provide concrete examples in Appendix \ref{appdx:example} to help readers understand 
Definition \ref{def:mixed_over_mixed}. By introducing this concept of a mixed strategy over mixed strategies, \( \sigma \), we can reformulate our objective as follows:  
\begin{align}
\max_{\pi(\mathbf{x})\in\Delta(\mathcal{X})} \min_{\sigma(\tau)\in \Delta({\mathcal{T})}}  
\mathbb{E}_{\pi(\mathbf{x})} \mathbb{E}_{\sigma(\tau)} \left( \mathbb{E}_{\tau(\mathbf{z})}
\left[ u(\mathbf{x}, \mathbf{z}) \right] \right)\notag \\
\mathcal{T} \;=\; \{\, \tau(\mathbf{z}) \;|\; D_{\mathrm{KL}}(\tau(\mathbf{z}) \,\|\, p_{\theta}(\mathbf{z}\mid \mathbf{c})) \leq \rho \}, 
\label{eq:new-formulation}
\end{align}
In this reformulation, the adversary’s pure strategy is no longer a single value but instead a full distribution \(\tau(\mathbf{z})\). Consequently, the adversary’s pure strategy space becomes $\mathcal{T}$
and the corresponding mixed strategy space is the set of distributions over these distributions, \(\Delta(\mathcal{T})\). Under this framework, the defender’s utility function takes the form \(\mathbb{E}_{\tau(\mathbf{z})}\bigl[u(\mathbf{x}, \mathbf{z})\bigr]\), while the attacker’s utility becomes \(-\mathbb{E}_{\tau(\mathbf{z})}\bigl[u(\mathbf{x}, \mathbf{z})\bigr]\).

Crucially, this reformulation shifts the KL-divergence constraint from the adversary’s mixed strategy space to its pure strategy space. As we will show in Section~\ref{sec:double_oracle}, the best response for such a constrained pure strategy can be written in closed form. Hence, Eq.~\eqref{eq:new-formulation} can be solved efficiently using the double oracle algorithm.

\begin{proposition}\label{thm:mixed_over_mixed}
The reformulated objective in Eq.~\eqref{eq:new-formulation} yields the same defender mixed strategy \(\pi(\mathbf{x})\) as the original formulation in Eq.~\eqref{eq:DRO}.

Proof. See Appendix.~\ref{appdx:mixed_over_mixed}.
\label{theorem:mixed}
\end{proposition}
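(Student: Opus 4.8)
The plan is to show that the two optimization problems have the same value and that any optimal $\pi$ for one is optimal for the other, by exhibiting a value-preserving correspondence between the adversary's strategy spaces. The key observation is that the inner objective in Eq.~\eqref{eq:new-formulation} is \emph{linear} in $\sigma$: for fixed $\pi$, the quantity $\mathbb{E}_{\pi(\mathbf{x})}\mathbb{E}_{\sigma(\tau)}(\mathbb{E}_{\tau(\mathbf{z})}[u(\mathbf{x},\mathbf{z})])$ is an average over $\tau\sim\sigma$ of the functional $g_\pi(\tau)\coloneq \mathbb{E}_{\pi(\mathbf{x})}\mathbb{E}_{\tau(\mathbf{z})}[u(\mathbf{x},\mathbf{z})]$. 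Hence for each fixed $\pi$,
\begin{align}
\min_{\sigma\in\Delta(\mathcal{T})} \mathbb{E}_{\sigma(\tau)}\big[g_\pi(\tau)\big] \;=\; \inf_{\tau\in\mathcal{T}} g_\pi(\tau)
\;=\; \min_{\tau\in\mathcal{T}}\mathbb{E}_{\pi(\mathbf{x})}\mathbb{E}_{\tau(\mathbf{z})}[u(\mathbf{x},\mathbf{z})],\notag
\end{align}
because minimizing a linear functional over a set of probability measures is achieved at a Dirac mass on a minimizer of the integrand (and the infimum over $\tau\in\mathcal{T}$ is attained: $u$ is bounded so $g_\pi$ is bounded, and the KL-ball $\mathcal{T}$ is convex and weakly compact, while $g_\pi$ is linear, hence weakly continuous, in $\tau$). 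First I would state and prove this reduction as the core lemma.

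Given the lemma, the outer maximization in Eq.~\eqref{eq:new-formulation} becomes $\max_{\pi\in\Delta(\mathcal{X})} \min_{\tau\in\mathcal{T}} \mathbb{E}_{\pi(\mathbf{x})}\mathbb{E}_{\tau(\mathbf{z})}[u(\mathbf{x},\mathbf{z})]$, which is \emph{syntactically identical} to Eq.~\eqref{eq:DRO}. Therefore the two problems have the same optimal value and, crucially, the same set of maximizers $\pi$, since the objective that $\pi$ is optimized against is the same function of $\pi$ in both cases. This establishes the proposition: any defender mixed strategy that is optimal in Eq.~\eqref{eq:new-formulation} is optimal in Eq.~\eqref{eq:DRO}, and conversely.

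The main obstacle is the technical justification that the inner infimum $\inf_{\tau\in\mathcal{T}} g_\pi(\tau)$ is actually \emph{attained} and that the equality $\min_\sigma \mathbb{E}_\sigma[g_\pi] = \inf_\tau g_\pi(\tau)$ holds rigorously in an infinite-dimensional setting where $\mathcal{T}$ is a set of distributions — one must argue compactness/attainment rather than merely invoking finite-dimensional intuition. I would handle this by noting that $\mathcal{T}$ is a KL-divergence ball around $p_\theta(\mathbf{z}\mid\mathbf{c})$, hence convex and, by lower semicontinuity of relative entropy plus tightness, weakly sequentially compact; the functional $\tau\mapsto\mathbb{E}_{\tau(\mathbf{z})}[u(\mathbf{x},\mathbf{z})]$ is bounded and weakly continuous since $u$ is bounded (and can be taken continuous in $\mathbf{z}$), so $g_\pi$ attains its minimum on $\mathcal{T}$. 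The one direction of the $\min_\sigma = \inf_\tau$ equality that needs care is "$\geq$": pick $\tau^\star\in\arg\min_\tau g_\pi(\tau)$ and take $\sigma=\delta_{\tau^\star}$, which lies in $\Delta(\mathcal{T})$ and achieves $\mathbb{E}_\sigma[g_\pi]=g_\pi(\tau^\star)$; the reverse inequality "$\leq$" is immediate since $\mathbb{E}_{\sigma(\tau)}[g_\pi(\tau)]\geq\inf_\tau g_\pi(\tau)$ for every $\sigma$ supported on $\mathcal{T}$. Everything else is routine substitution.
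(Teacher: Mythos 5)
Your proposal is correct, and its skeleton matches the paper's: fix $\pi$, show the inner minimum over $\sigma\in\Delta(\mathcal{T})$ equals the inner minimum over $\tau\in\mathcal{T}$, then note the outer problems coincide. Where you differ is in the ``reformulated $\geq$ original'' direction. You argue it by the elementary pointwise bound: since every $\tau$ in the support of $\sigma$ lies in $\mathcal{T}$, $\mathbb{E}_{\sigma}[g_\pi(\tau)]\geq\inf_{\tau\in\mathcal{T}}g_\pi(\tau)$, so no structure of the constraint set is needed. The paper instead takes a minimizing $\sigma'$, forms the barycenter $\tau'(\mathbf{z})=\mathbb{E}_{\sigma'}[\tau(\mathbf{z})]$, and uses convexity of KL divergence (Jensen) to show $\tau'\in\mathcal{T}$, so that $\tau'$ is feasible for Eq.~\eqref{eq:DRO} with the same value. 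Your route is simpler and works for an arbitrary constraint set $\mathcal{T}$ (convex or not); the paper's route costs the KL-convexity step but buys a side fact it uses elsewhere (Appendix~\ref{appdx:example}): every mixed strategy over mixed strategies induces a marginal over $\mathbf{z}$ that itself satisfies the KL constraint. The Dirac-mass direction is identical in both. Two small remarks on your technical padding: the attainment discussion is not strictly necessary, since both inequalities hold at the level of infima and the equality of the two inner values (hence of the outer maximizers) follows without compactness — the paper likewise just writes ``min'' and implicitly assumes attainment; and if you do want weak continuity of $\tau\mapsto\mathbb{E}_{\tau}[u(\mathbf{x},\mathbf{z})]$, boundedness of $u$ alone does not suffice — you need $u$ continuous in $\mathbf{z}$, which you correctly flag parenthetically.
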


By Proposition~\ref{theorem:mixed}, solving Eq.~\eqref{eq:new-formulation} is equivalent to solving Eq.~\eqref{eq:DRO}. Therefore, applying the double oracle algorithm to Eq.~\eqref{eq:new-formulation} recovers the optimal defender mixed strategy for the original problem (Eq.~\eqref{eq:DRO}).

Since we have reformulated the problem, we will henceforth refer to the adversary's pure strategy as \(\tau(\mathbf{z})\) and the mixed strategy as \(\sigma(\tau)\).

\subsection{Double Oracle Flow}\label{sec:double_oracle}

% The double oracle framework comprises three key components: the \textit{attacker oracle}, the \textit{defender oracle}, and the \textit{Mixed Nash Equilibrium solver}. At each iteration, the oracles compute the best response to the opponent’s mixed strategy, thereby expanding the strategy set. The Mixed Nash Equilibrium solver then determines the equilibrium of the resulting subgame.

The overall double oracle algorithm is outlined in Algorithm~\ref{alg:double-oracle} and illustrated in Figure~\ref{fig:overview}. We begin by initializing the adversary’s strategy as \(\tau_0 = p_{\theta}(\mathbf{z}|\mathbf{c}) \) and selecting a random initial defender strategy \(\mathbf{x}_0\) from \(\mathcal{X}\) (lines 2-3), forming the initial strategy sets \(\mathcal{T}_0\) and \(\mathcal{X}_0\). These serve as the foundation for the iterative process.
In each iteration, we first sample from each distribution in \(\mathcal{T}_{i-1} = \{\tau_0, \dots, \tau_{i-1}\}\) to obtain a set of empirical distributions \(\hat{\mathcal{T}}_{i-1} = \{\hat{\tau}_0, \dots, \hat{\tau}_{i-1}\}\) (line 6). These empirical distributions are used to estimate expected utilities, which are then input into the \textit{Mixed Nash Equilibrium solver} to compute an equilibrium \((\pi_{i-1}^*, \hat{\sigma}^*_{i-1})\) of the subgame \(\{\mathcal{X}_{i-1}, \hat{\mathcal{T}}_{i-1}, U \}\) (lines 7). Next, the \textit{defender oracle} and \textit{attacker oracle} compute their respective best responses to the mixed strategy, yielding new strategies \(\mathbf{x}_i\) and \(\tau_i(\mathbf{z})\) (lines 8-9). These best response strategies are then added to the strategy sets, expanding them to \(\mathcal{X}_i\) and \(\mathcal{T}_i\) (line 10). 

 % ensuring that neither player can improve their utility beyond a specified tolerance \(\epsilon\)
This iterative procedure alternates between the oracles and the solver until convergence (lines 11–13). The parameters \(\mathsf{prob}\) and tolerance $\epsilon$ are user-defined and guarantee that the algorithm converges to an \(4\epsilon\)-equilibrium with probability \(1 - \mathsf{prob}\), as detailed in Theorem~\ref{thm:convergence_inf_round}. In practice, to manage runtime, we cap the number of double oracle iterations to a fixed limit—a common strategy also employed in \citet{lanctot2017unified, pmlr-v161-xu21a}. 

We introduce the details of the three key components defender oracle, adversary oracle and Mixed Nash equilibrium solver as below.

\textbf{Adversary Oracle}  
At the \(i\)-th iteration, given the defender's mixed strategy $\pi_{i-1}^*$, %\(\pi^*_{i-1} = \sum_{j=1}^{i-1} w^{\rm a}_j \delta_{\mathbf{x}_j}\), 
the adversary oracle computes the best response by solving:
% \begin{align}
% \tau_{i}(\mathbf{z}) 
% &= \arg\max_{\tau\in \mathcal{T}} U(\pi^{*}_{i-1}, \tau) \notag \\
% &= \arg\max_{\tau \in \mathcal{T}} 
% \mathbb{E}_{\tau(\mathbf{z})} \biggl[\sum_{j=1}^{i-1} w^{\rm a}_j\,u(\mathbf{x}_j, \mathbf{z})\biggr]
% \label{eq:nature}
% \end{align}
\begin{align}
&\tau_{i}(\mathbf{z}) 
= \arg\max_{\tau\in \mathcal{T}} U(\pi^{*}_{i-1}, \tau) \notag \\
&\mathcal{T} = \left\{ \tau_i(\mathbf{z}) \mid D_{\rm KL}(\tau_i(\mathbf{z}) \parallel p_{\theta}(\mathbf{z} \mid \mathbf{c})) \leq \rho \right\},
\label{eq:nature}
\end{align}

% where \(U(\pi_i, \tau)\) represents the expected utility under the mixed strategies \(\pi_i\) and \(\tau\).  
% \begin{align}
% \textstyle\tau_i(\mathbf{z}) \propto p_{\theta}(\mathbf{z} | \mathbf{c}) \exp\left(-\gamma \sum_{j=1}^{i-1} w^{\rm a}_j u(\mathbf{x}_j, \mathbf{z})\right),
% \label{eq:bo-poacher}
% \end{align}
\begin{proposition}
\label{propo:kl}
The optimal solution $\tau_i(\mathbf{z})$ of \ref{eq:nature} has a closed-form:
\begin{align}
\textstyle\tau_i(\mathbf{z}) \propto p_{\theta}(\mathbf{z} | \mathbf{c}) \exp\left(-\gamma U(\pi^*_{i-1}, \mathbf{z})\right),
\label{eq:bo-poacher}
\end{align}
where \(\gamma\) is the Lagrange multiplier associated with the KL-divergence constraint.

Proof. See Appendix~\ref{appdx:propo_kl}
\end{proposition}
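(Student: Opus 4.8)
The plan is to treat Eq.~\eqref{eq:nature} as a convex optimization problem over the probability simplex (or, more precisely, over densities absolutely continuous with respect to $p_\theta(\mathbf{z}\mid\mathbf{c})$) and to characterize the optimum via Lagrangian duality. Writing the objective $U(\pi^*_{i-1},\tau)=\mathbb{E}_{\tau(\mathbf{z})}\bigl[U(\pi^*_{i-1},\mathbf{z})\bigr]=\int \tau(\mathbf{z})\,U(\pi^*_{i-1},\mathbf{z})\,d\mathbf{z}$, the adversary \emph{minimizes} this linear functional subject to the KL ball constraint $D_{\mathrm{KL}}(\tau\parallel p_\theta)\le\rho$ and the normalization constraint $\int\tau(\mathbf{z})\,d\mathbf{z}=1$. (Note the $\arg\max$ in Eq.~\eqref{eq:nature} is the adversary oracle's best response, but since the adversary is the minimizing player in Eq.~\eqref{eq:DRO}, the sign in the exponent of Eq.~\eqref{eq:bo-poacher} is negative; I would be careful to state which convention is in force.) The feasible set is convex and the objective is linear, so strong duality holds under a Slater-type condition — namely $\rho>0$, so that $\tau=p_\theta$ is strictly feasible.

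First I would form the Lagrangian
\[
\mathcal{L}(\tau,\gamma,\nu)=\int \tau(\mathbf{z})\,U(\pi^*_{i-1},\mathbf{z})\,d\mathbf{z}
+\gamma\Bigl(D_{\mathrm{KL}}(\tau\parallel p_\theta)-\rho\Bigr)
+\nu\Bigl(\int\tau(\mathbf{z})\,d\mathbf{z}-1\Bigr),
\]
with $\gamma\ge 0$. Next I would take the functional (Gateaux) derivative with respect to $\tau(\mathbf{z})$ and set it to zero: $U(\pi^*_{i-1},\mathbf{z})+\gamma\bigl(\log\frac{\tau(\mathbf{z})}{p_\theta(\mathbf{z}\mid\mathbf{c})}+1\bigr)+\nu=0$. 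Solving for $\tau(\mathbf{z})$ gives $\tau(\mathbf{z})=p_\theta(\mathbf{z}\mid\mathbf{c})\exp\!\bigl(-\tfrac{1}{\gamma}U(\pi^*_{i-1},\mathbf{z})-1-\tfrac{\nu}{\gamma}\bigr)$; absorbing the $\mathbf{z}$-independent terms into a normalizing constant yields $\tau_i(\mathbf{z})\propto p_\theta(\mathbf{z}\mid\mathbf{c})\exp\!\bigl(-\tfrac{1}{\gamma}U(\pi^*_{i-1},\mathbf{z})\bigr)$, which is the claimed form (after relabeling $1/\gamma$ as the multiplier, or absorbing it — I would make the statement's $\gamma$ match whichever scaling is used). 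I would then note that the normalization constant is finite because $U$ is bounded in $[0,M]$ and $p_\theta$ is a probability density, so the tilted density is well-defined for every $\gamma>0$.

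To finish, I would argue optimality: the KKT conditions (stationarity, primal/dual feasibility, complementary slackness $\gamma(D_{\mathrm{KL}}(\tau_i\parallel p_\theta)-\rho)=0$) are sufficient here by convexity, so the tilted $\tau_i$ with the appropriate $\gamma\ge 0$ is the global optimum. There are two regimes: if the unconstrained optimum (putting all mass where $U$ is smallest) already satisfies the KL bound, then $\gamma=0$ and the constraint is inactive; otherwise the constraint binds and $\gamma>0$ is the unique value making $D_{\mathrm{KL}}(\tau_i\parallel p_\theta)=\rho$, which exists because the map $\gamma\mapsto D_{\mathrm{KL}}(\tau_i(\gamma)\parallel p_\theta)$ is continuous and monotonically decreasing from the (possibly infinite) unconstrained divergence at $\gamma\to 0^+$ down to $0$ as $\gamma\to\infty$. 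The main obstacle is the rigor of the infinite-dimensional variational step — justifying the exchange of differentiation and integration and confirming that the stationarity condition is genuinely necessary and sufficient over the space of densities — but this is standard (it is the classical Donsker–Varadhan / Gibbs variational characterization of KL-constrained linear optimization), so I would either cite it or give the short convexity argument directly. A minor bookkeeping point is reconciling the scaling of $\gamma$ between the "$1/\gamma$" that falls out of the Lagrangian and the "$\gamma$" written in Eq.~\eqref{eq:bo-poacher}; I would simply define $\gamma$ in the statement to be the reciprocal of the KL multiplier, or note the reparametrization explicitly.
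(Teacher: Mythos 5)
Your proposal is correct and takes essentially the same route as the paper's proof: form the Lagrangian with multipliers for the KL-divergence and normalization constraints, take the functional derivative, set it to zero, and read off the exponentially tilted density, with the statement's $\gamma$ arising from a reparametrization of the KL multiplier (the paper defines $\gamma=-1/\alpha$ to absorb the sign). Your extra care about the min-vs-max convention in Eq.~\eqref{eq:nature}, Slater/KKT sufficiency, finiteness of the normalizer, and the existence of the binding $\gamma$ goes beyond what the paper writes, but the core argument is the same.
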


% The proof of Proposition \ref{propo:kl} is deferred to Appendix \ref{appdx:propo_kl}.
As shown in Eq.~\ref{eq:bo-poacher}, \(\tau_i(\mathbf{z})\) is an unnormalized distribution obtained by reweighting the original diffusion model distribution according to the utility function. Computing expected utilities under \(\tau_i(\mathbf{z})\) requires sampling from this high-dimensional unnormalized distribution, which is challenging in practice. To address this, we employ twisted Sequential Monte Carlo (SMC) techniques~\citep{chopin2020introduction,wu2023practical}, detailed in Section 4.2, which provide asymptotically exact utility estimates. We denote the resulting empirical distribution as \(\hat{\tau}_i(\mathbf{z}).\)

\textbf{Defender Oracle} 
At the \(i\)-th epoch, given the attacker's mixed strategy \(\hat{\sigma}^*_{i-1}\), the defender oracle computes the best response by solving:  
\begin{align}
\label{eq:defender_oracle}
\vspace{-1em}
\mathbf{x}_{i} = \arg\max_{\mathbf{x}\in\mathcal{X}} U(\mathbf{x}, \hat{\sigma}^*_{i-1}).
\vspace{-1em}
\end{align}
Since \(\hat{\sigma}^*_{i-1}\) represents a mixed strategy over a set of empirical distributions $\hat{\mathcal{T}}_{i-1}$, we can directly compute the expected utility, reducing the problem to a standard deterministic optimization. To  handle the budget constraint in our setting, we employ mirror ascent~\citep{nemirovski2012tutorial}.

\textbf{Mixed Nash Equilibrium Solver}  
At $i$-th iteration, the Mixed Nash Equilibrium solver computes a mixed Nash equilibrium  $(\pi^*_{i-1}, \hat{\sigma}^*_{i-1})$ over the players' current strategy sets $\mathcal{X}_{i-1}$ and $\hat{\mathcal{T}}_{i-1}$. 
The equilibrium can be found using linear programming \citep{nisan2007algorithmic}, and in our work, we utilize the PuLP implementation~\citep{pulp} for this purpose. 
\begin{algorithm}[!t]
\small
\caption{Double Oracle with Diffusion Models}
\label{alg:double-oracle}
\begin{algorithmic}[1]
\Require Pretrained diffusion model $p_{\theta}(\mathbf{z} \mid \mathbf{c})$, utility function $U(\mathbf{x}, \tau)$, probability threshold $\mathsf{prob} > 0$ 
\State Initialize $i \gets 0$
\State $\mathbf{x}_0 \gets \text{random strategy}$, \quad $\tau_0 \gets p_{\theta}(\mathbf{z}\mid \mathbf{c})$
\State $\mathcal{X}_0 \gets \{\mathbf{x}_0\}$, \quad $\mathcal{T}_0 \gets \{\tau_0\}$

\Repeat
    \State $i \gets i+1$
    \State $\hat{\mathcal{T}}_{i-1} \gets \text{Empirical distributions from }\mathcal{T}_{i-1}$ using Alg.~\ref{alg:twisted-smc}
    \State $(\pi_{i-1}^*, \hat{\sigma}_{i-1}^*) \gets \textsc{MixedNashSolver}(\mathcal{X}_{i-1},\hat{\mathcal{T}}_{i-1}, U)$
    
    \State $\mathbf{x}_i \gets \underset{\mathbf{x}\in \mathcal{X}}{\arg\max}\;U(\mathbf{x}, \hat{\sigma}_{i-1}^*)$ //Adversary Oracle
    
    \State $\tau_i(\mathbf{z}) \propto p_{\theta}(\mathbf{z}|\mathbf{c}) 
            \exp\bigl(-\gamma\,U(\pi_{i-1}^*, \mathbf{z})\bigr)$ // Defender Oracle
    
    \State$\mathcal{X}_i \gets \mathcal{X}_{i-1} \cup \{\mathbf{x}_i\}, 
            \quad \mathcal{T}_i \gets \mathcal{T}_{i-1} \cup \{\tau_i\}$
    
  \State $\hat{\tau}_i \gets \text{sample from }\tau_i$ using Alg~\ref{alg:twisted-smc}
   \State $\underline{v}_i \gets U(\pi_{i-1}^*, \hat{\tau}_i), \quad
            \bar{v}_i \gets U(\mathbf{x}_i, \hat{\sigma}_{i-1}^*)$

\Until{$(\bar{v}_i - \underline{v}_i \in (-2\epsilon, 2\epsilon)) 
       \,\wedge\, (i > 1/(16\,\mathsf{prob}))$}

\State \textbf{Output:} Final defender strategy $\pi_{i-1}^*$
\end{algorithmic}
\end{algorithm}

\subsection{Sampling with Twisted Sequential Monte Carlo}\label{sec:twisted_sampler}

To efficiently sample from the unnormalized distribution in Eq.~\ref{eq:bo-poacher} while ensuring correctness, we leverage Twisted Sequential Monte Carlo (Twisted SMC)~\citep{chopin2020introduction}, an adaptive importance sampling technique that improves sampling through sequential proposal and weighting. \citet{wu2023practical} applied it to sampling from a conditional distribution with diffusion model; here, we adapt it to sample from the unnormalized reweighted distribution in Eq.~\ref{eq:bo-poacher}.

Twisted SMC operates with a collection of \(N\) weighted particles \(\{(w_n^t, \mathbf{z}_n^t)\}_{n=1}^N\) that evolve iteratively over \(T\) steps. At each step \(t\), particles are propagated using an adjusted score function, similar to \citet{chung2023diffusion}:
\[
\hat{p}_\theta(\mathbf{z}^{t-1} \mid \mathbf{z}^{t}, \mathbf{c}) 
= \mathcal{N}\!\bigl(\mathbf{z}^{t-1};\, \mathbf{z}^{t} + \sigma^2 \hat{s}_\theta(\mathbf{z}^{t}, \mathbf{c}, t),\, \hat{\beta}^2\bigr),
\]
where the adjusted score function is:
\[
\textstyle \hat{s}_\theta(\mathbf{z}^{t}, \mathbf{c}, t) 
= s_\theta(\mathbf{z}^{t}, \mathbf{c}, t) 
+ \gamma \log\Phi_t(\mathbf{z}^t).
\]
The twisting function \(\Phi_t\) is defined as:
\begin{align}
\textstyle\Phi_t(\mathbf{z}_n^t) = \exp\left(-\gamma  U(\pi_{i-1}^*, \hat{\mathbf{z}}_{\theta}^0(\mathbf{z}_n^t))\right).
\label{eq:twisting}
\end{align}
Here, \(\hat{\mathbf{z}}^0_\theta(\mathbf{z}^t)\) estimates the original state \(\mathbf{z}^0\) using Tweedie’s formula~\citep{robbins1992empirical,efron2011tweedie}:
\[
\textstyle
\hat{\mathbf{z}}^0_\theta(\mathbf{z}^t) = \mathbf{z}^t + t \beta^2\, s_\theta(\mathbf{z}^t, \mathbf{c}, t).
\]
At \(t = 0\), we set \(\hat{\mathbf{z}}^0_\theta(\mathbf{z}^0) \coloneqq \mathbf{z}^0\). The correction term in \(\hat{s}_\theta\) reconstructs \(\mathbf{z}^0\) and incorporates the reweighted term from Eq.~\ref{eq:bo-poacher}, ensuring proper adaptation of the sampling process.

To account for discrepancies between the proposal and target distributions, Twisted SMC assigns a weight to each particle:
\[
\textstyle w^t_n = \frac{p_\theta(\mathbf{x}_n^t|\mathbf{x}_n^{t+1},\mathbf{c}) \Phi_t(\mathbf{x}_n^t)}{ \hat{p}_\theta(\mathbf{x}_n^t|\mathbf{x}_n^{t+1},\mathbf{c})\Phi_{t+1}(\mathbf{x}_n^{t+1})}.
\]

This reweighting step ensures unbiased estimation.

To mitigate variance and prevent particle degeneracy over long horizons, we apply multinomial resampling at each step based on normalized weights~\citep{douc2005comparison}. The final approximation of the target distribution is:
$\hat{\tau} = \sum_{n=1}^N\frac{w_n^0}{\sum_{n'=1}^N w_{n'}^0} \delta_{\mathbf{z}_n^0}.$

A full description of Twisted SMC is provided in Algorithm~\ref{alg:twisted-smc}.

% By leveraging DPS as a proposal mechanism while incorporating correction terms via importance weighting, Twisted SMC ensures a more robust and unbiased sampling procedure, addressing the limitations of previous diffusion-based heuristics~\citep{lu2023contrastive}.

\begin{algorithm}[t]
\small
\caption{Twisted SMC for Diffusion Model}
\label{alg:twisted-smc}
\begin{algorithmic}[1]
\Require Pretrained diffusion model, number of particles $N$, time horizon $T$, $\Phi(\mathbf{z})$ (Eq.~\ref{eq:twisting})
\State Initialize $\mathbf{z}_n^T \sim p_{\theta}(\mathbf{z}^T)$,\; $w_n \gets \Phi(\mathbf{z}_n^T)$
\For{$t = T, \dots, 1$}
  \State \textbf{Resample:} \\
    $\quad\quad\{\mathbf{z}_n^t\}_{n=1}^N \sim \mathrm{Multinomial}\bigl(\{\mathbf{z}_n^t\}_{n=1}^N;\,\{w_n^t\}_{n=1}^N\bigr)$
  \For{$k = 1 \dots K$}
    \State $\displaystyle 
           \hat{s}_k \gets s_\theta(\mathbf{z}_k^t,\mathbf{c},t) \;-\; 
             \gamma \,\nabla_{\mathbf{z}_k^t}\Bigl[U(\pi^*_{i-1},\mathbf{z})\Bigr]$
    \State $\mathbf{z}_k^{t-1} \sim 
            \mathcal{N}\bigl(\mathbf{z}_k^t + \sigma^2 \hat{s}_k,\;\hat{\beta}^2\bigr)$
    \State $\displaystyle
           w_k^{t-1} \gets 
           \frac{p_{\theta}\bigl(\mathbf{z}_k^{t-1} \mid \mathbf{z}_k^t,\mathbf{c}\bigr)\,\Phi(\mathbf{z}_k^{t-1})}{
                 \hat{p}_{\theta}\bigl(\mathbf{z}_k^{t-1} \mid \mathbf{z}_k^t,\mathbf{c}\bigr)\,\Phi(\mathbf{z}_k^{t})}$
  \EndFor
\EndFor
\State \textbf{Output:} Weighted particles $\{\mathbf{z}_k^0,\, w_k^0\}_{k=1}^K$
\end{algorithmic}
\end{algorithm}

\begin{proposition}
\label{thm:twisted_diffusion_sampler}
(Informal) Under regularity conditions on the score function, as the number of particles \( N \to \infty \), we have
\[
U(\mathbf{x},\hat{\tau}(\mathbf{z})) \to U(\mathbf{x},\tau(\mathbf{z})) \quad \text{almost surely},
\]
where \( \hat{\tau} \) is the empirical distribution returned by Algorithm~\ref{alg:twisted-smc}.

Proof. See Appendix.~\ref{appdx:twisted_diffusion_model}.

% \textbf{Error Bound under Finite Samples:}  
% For a finite number of particles, the following bound holds for the mean squared error of the Twisted SMC sampler:
% \[
% \textstyle \mathbb{E} \left[ \left| U(\mathbf{x}, \tau(\mathbf{z})) - U(\mathbf{x}, \hat{\tau}(\mathbf{z})) \right|^2 \right] \leq \frac{C M^2}{N}.
% \]
\end{proposition}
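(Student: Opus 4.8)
The plan is to recognize Algorithm~\ref{alg:twisted-smc} as a standard sequential importance sampling with resampling (SISR) particle scheme on the diffusion path space that is \emph{properly weighted} for the target $\tau$, and then invoke the classical strong law of large numbers for such schemes.

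First I would make the target and its path-space lift explicit. By Proposition~\ref{propo:kl}, the target is $\tau(\mathbf{z}^0)\propto p_{\theta}(\mathbf{z}^0\mid\mathbf{c})\,\Phi_0(\mathbf{z}^0)$ with $\Phi_0(\mathbf{z}^0)=\exp(-\gamma U(\pi^*_{i-1},\mathbf{z}^0))$, using the convention $\hat{\mathbf{z}}^0_{\theta}(\mathbf{z}^0)\coloneqq\mathbf{z}^0$ so that the terminal twisting function is exact. Lifting $\tau$ to the path-space measure $\Gamma(\mathbf{z}^{0:T})\coloneqq p_{\theta}(\mathbf{z}^T)\prod_{t=1}^T p_{\theta}(\mathbf{z}^{t-1}\mid\mathbf{z}^t,\mathbf{c})\,\Phi_0(\mathbf{z}^0)$ gives a measure whose $\mathbf{z}^0$-marginal is proportional to $\tau$. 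The algorithm proposes from $q(\mathbf{z}^{0:T})=p_{\theta}(\mathbf{z}^T)\prod_{t=1}^T\hat{p}_{\theta}(\mathbf{z}^{t-1}\mid\mathbf{z}^t,\mathbf{c})$, and a short telescoping computation should show that the initial weight $\Phi_T(\mathbf{z}^T)$ times the product of the incremental weights $w^t=\frac{p_{\theta}(\mathbf{z}^t\mid\mathbf{z}^{t+1},\mathbf{c})\,\Phi_t(\mathbf{z}^t)}{\hat{p}_{\theta}(\mathbf{z}^t\mid\mathbf{z}^{t+1},\mathbf{c})\,\Phi_{t+1}(\mathbf{z}^{t+1})}$ collapses exactly to $\Gamma(\mathbf{z}^{0:T})/q(\mathbf{z}^{0:T})$. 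This establishes that $\hat{\tau}=\sum_n\frac{w_n^0}{\sum_{n'}w_{n'}^0}\delta_{\mathbf{z}_n^0}$ is a properly weighted particle approximation of $\tau$; I would also note that the intermediate twisting functions $\Phi_t$ ($t\ge 1$), which are built from the Tweedie estimate $\hat{\mathbf{z}}^0_{\theta}$, cancel in the telescoping product and hence influence only the proposal variance, not the limiting distribution.

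Second I would verify the regularity needed for the SMC law of large numbers. Since $u\in[0,M]$ and $\gamma\ge 0$, every twisting factor obeys $\Phi_t\in[e^{-\gamma M},1]$, so it is bounded and bounded away from zero. The ratio $p_{\theta}(\mathbf{z}^{t-1}\mid\mathbf{z}^t,\mathbf{c})/\hat{p}_{\theta}(\mathbf{z}^{t-1}\mid\mathbf{z}^t,\mathbf{c})$ is a ratio of Gaussians with common covariance whose logarithm is affine in $\mathbf{z}^{t-1}$ with coefficient the mean shift $\sigma^2(s_{\theta}-\hat{s}_{\theta})$; under the stated regularity on the score network---boundedness/Lipschitzness of $s_{\theta}$ and of the utility gradient on the effective domain, or, more simply, a bounded state space (natural here since $\mathbf{z}$ records snare counts over finitely many bounded cells)---this ratio has conditional moments of all orders bounded uniformly along the particle history, so each incremental weight enjoys the finite-moment property required by standard SISR consistency results.

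Finally, I would invoke the strong law of large numbers for SISR particle systems with multinomial resampling at every step (e.g., \citet{chopin2020introduction}): under the conditions above, for any bounded measurable $f$ one has $\sum_{n=1}^N\frac{w_n^0}{\sum_{n'}w_{n'}^0}f(\mathbf{z}_n^0)\to\mathbb{E}_{\tau}[f]$ almost surely as $N\to\infty$. Taking $f(\cdot)=u(\mathbf{x},\cdot)$, which is bounded in $[0,M]$, gives $U(\mathbf{x},\hat{\tau})\to U(\mathbf{x},\tau)$ almost surely, which is the claim. The main obstacle is the second step: the mean-shifted Gaussian ratio is not globally bounded in $\mathbf{z}^{t-1}$, so a rigorous argument must use the moment version rather than the bounded-weight version of the SMC law of large numbers---this is exactly what the ``regularity conditions on the score function'' are there to supply, and it is where I expect most of the work to lie; once the weights are controlled, the telescoping bookkeeping of the first step and the citation in the last step are routine.
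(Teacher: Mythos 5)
Your overall architecture is the same as the paper's: the identical telescoping computation on path space, showing that the initial weight times the product of incremental weights collapses to $p_{\theta}(\mathbf{z}^{0:T})\,\Phi_0(\mathbf{z}^0)$ divided by the proposal, so that the $\mathbf{z}^0$-marginal of the weighted particle system targets $\tau$, followed by an appeal to the SMC law of large numbers in \citet{chopin2020introduction} applied to the bounded test function $u(\mathbf{x},\cdot)$. The gap is in your middle step, the control of the weights, and it stems from a misreading of Algorithm~\ref{alg:twisted-smc}: you state that $p_{\theta}(\mathbf{z}^{t-1}\mid\mathbf{z}^t,\mathbf{c})$ and $\hat{p}_{\theta}(\mathbf{z}^{t-1}\mid\mathbf{z}^t,\mathbf{c})$ are Gaussians with \emph{common} covariance, conclude that the log-ratio is affine and hence the weights are unbounded, and then defer to a moment version of the SISR consistency theorem whose hypotheses you do not verify. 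In the algorithm the proposal variance is $\hat{\beta}^2$ while the model transition has variance $\beta^2$, and the paper's formal statement imposes $\hat{\beta}^2>\beta^2$ precisely so that, together with the bounded gradients of $\log\Phi_t$ (which keep the mean shift uniformly bounded) and the boundedness of the twisting ratios, each incremental weight is \emph{bounded}; this is the argument the paper imports from Appendix A.5 of \citet{wu2023practical}, after which Propositions 11.3 and 11.5 of \citet{chopin2020introduction} for bounded weights yield both the almost sure convergence claimed here and the finite-$N$ mean-squared-error bound that is later needed in Theorem~\ref{thm:convergence_inf_round}. So the ``main obstacle'' you flag dissolves under the actual assumptions rather than requiring a moment-based argument.

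Your suggested patches for the moment route also do not hold up as stated: compactness of the support of $p_{\theta}(\mathbf{z}\mid\mathbf{c})$ (Assumption~\ref{assump:full_support}) constrains the data-level variable, not the intermediate particles $\mathbf{z}^t$, which are produced by Gaussian proposals on all of $\mathbb{R}^K$, so a ``bounded state space'' cannot be invoked to bound the Gaussian ratio; and uniform-in-history moment bounds for genuinely unbounded incremental weights would require a separate argument you have not supplied. If you instead adopt the regularity conditions in the paper's full statement---positive, bounded twisting ratios, bounded $\nabla_{\mathbf{z}^t}\log\Phi_t$, and $\hat{\beta}^2>\beta^2$---your first and third steps coincide with the paper's proof and the argument closes with the bounded-weight law of large numbers.
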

% These properties of the sampler will be used to analyze the convergence of the double oracle algorithm.

\subsection{Convergence Analysis}\label{sec:convergence_analysis}

In Section \ref{sec:convergence_analysis}, we analyze the convergence properties of our framework. For theoretical analysis, we introduce two mild assumptions. 
\begin{assumption}\label{assump:concavity}
    We assume that the utility function is twice differentiable and concave with respect to $\mathbf{x}$.
\end{assumption}
% \yuqi{maybe less words here? just "twice differentiable and concave in x"}
% and we assume that $\frac{\partial^2 U(x,\tau)}{\partial x^2} \leq 0 \ \forall x$, i.e., the utility function is concave in ranger effort.
Assumption \ref{assump:concavity} implies there is diminishing marginal return in ranger effort, which is  a common assumption in economics models \citep{mankiw1998principles} and reflects the intuition that initial patrol efforts contribute more significantly to wildlife protection than additional increments in effort. Under assumption \ref{assump:concavity}, Eq. \ref{eq:defender_oracle} is a convex optimization problem and existing optimization solvers~\citep{diamond2016cvxpy} can accurately find the defender's best response. 

\begin{assumption}
\label{assump:full_support}
We assume that the distribution \(p_{\theta}(\mathbf{z}\mid \mathbf{c})\) places its mass on a compact space.
\end{assumption}

 In practice, the attacker’s action at each target must lie in a bounded interval, e.g.\ \([0, z_{\mathrm{max}}]\). For instance, the number of snares at any region cannot exceed a practical upper limit. Consequently, it is reasonable to treat the action space as compact, ensuring that \(p_{\theta}(\mathbf{z}\mid \mathbf{c})\) has compact support.

% In Section \ref{sec:convergence_analysis}, we analyze the convergence properties of our framework.
For each $\hat{\sigma}^*_i$, we denote the corresponding mixed strategy on the underlying true adversary strategy distribution as $\sigma^*_i$. Formally, $\sigma^*_i(\tau_l) = \hat{\sigma}^*_i(\hat{\tau_l}) \ \forall l \in [i]$.
Without the terminating condition, Algorithm~\ref{alg:double-oracle} produces two sequences of mixed strategies: $(\pi_i^*)_{i=0}^{\infty}$ and $(\sigma_i^*)_{i=0}^{\infty}$. Proposition \ref{thm:twisted_diffusion_sampler} says if we use infinite samples to estimate expected utilities, then there is no estimation error and 
Theorem~\ref{cor:double_oracle_infinite} follows from the original double oracle algorithm's proof~\citep{adam2021double}.

% If we have access to infinite samples to approximate the poacher strategy distribution at each iteration, an immediate consequence of Proposition \ref{thm:twisted_diffusion_sampler} is that there is no estimation error and Theorem~\ref{cor:double_oracle_infinite} follows after the proof of the original DO algorithm.   
% An immediate consequence of Proposition \ref{thm:twisted_diffusion_sampler} is for every iteration, if an infinite number of samples are available for estimating each cell in the payoff matrix, then there is no estimation error and Corollary \ref{cor:double_oracle_infinite} follows after the proof of the original DO algorithm.   

\begin{theorem}\label{cor:double_oracle_infinite}
Without terminating conditions, under assumptions \ref{assump:concavity}, \ref{assump:full_support}, if we use $N \rightarrow \infty$ samples for all iterations, every weakly convergent subsequence of Alg.~\ref{alg:double-oracle} converges to an exact equilibrium in possibly infinite iterations. Such a weakly convergent subsequence always exists. \footnote{We include the definition of weak convergence in Appendix \ref{appdx:weak_convergence}.}
\end{theorem}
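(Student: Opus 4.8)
The plan is to reduce Theorem~\ref{cor:double_oracle_infinite} to the abstract convergence result for double oracle in two-player zero-sum continuous games due to \citet{adam2021double}, by verifying that the reformulated game of Eq.~\eqref{eq:new-formulation} satisfies all the hypotheses that theorem requires. Concretely, those hypotheses are: (i) the two pure-strategy spaces are compact metric spaces; (ii) the payoff function is continuous (jointly) on the product of the two pure-strategy spaces; and (iii) the oracles return exact best responses at every iteration. Under these conditions, the sequence of subgame equilibria has a weakly convergent subsequence whose limit is an equilibrium of the full game, and every weakly convergent subsequence does so.

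First I would pin down the strategy spaces. The defender's pure-strategy space is $\mathcal{X} = \{\mathbf{x}\in\mathbb{R}^K : x_k\ge 0,\ \sum_k x_k\le B\}$, a compact convex polytope. The adversary's pure-strategy space in the reformulation is $\mathcal{T} = \{\tau : D_{\mathrm{KL}}(\tau\,\|\,p_\theta(\cdot\mid\mathbf{c}))\le\rho\}$; by Assumption~\ref{assump:full_support}, $p_\theta(\cdot\mid\mathbf{c})$ is supported on a compact set $\mathcal{Z}$, so $\mathcal{T}\subseteq\Delta(\mathcal{Z})$, and $\Delta(\mathcal{Z})$ is compact in the weak topology (Prokhorov). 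I would then argue $\mathcal{T}$ is closed in that topology: the map $\tau\mapsto D_{\mathrm{KL}}(\tau\,\|\,p_\theta)$ is weakly lower semicontinuous, so $\{D_{\mathrm{KL}}\le\rho\}$ is closed, hence compact. Next, continuity of the payoff $\tilde u(\mathbf{x},\tau) := \mathbb{E}_{\tau(\mathbf{z})}[u(\mathbf{x},\mathbf{z})]$: since $u$ is bounded in $[0,M]$ and (by Assumption~\ref{assump:concavity}) continuous in $\mathbf{x}$ — and continuous in $\mathbf{z}$ on the compact set $\mathcal{Z}$, hence uniformly continuous and bounded — joint continuity follows from dominated convergence in $\mathbf{x}$ and the definition of weak convergence in $\tau$; a short $\varepsilon$-argument combines the two. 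This makes $\tilde u$ continuous on the compact product $\mathcal{X}\times\mathcal{T}$.

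Then I would invoke that with $N\to\infty$ samples the oracles are exact: the defender oracle solves a convex program exactly (Assumption~\ref{assump:concavity}, via \citealp{diamond2016cvxpy}), the adversary oracle returns the closed-form best response of Proposition~\ref{propo:kl}, and by Proposition~\ref{thm:twisted_diffusion_sampler} the Twisted SMC estimates of all utilities converge almost surely to the true values, so in the $N\to\infty$ limit the empirical distributions $\hat\tau_l$ carry no estimation error — here I would use the identification $\sigma^*_i(\tau_l) = \hat\sigma^*_i(\hat\tau_l)$ to transfer the subgame equilibria computed on $\hat{\mathcal{T}}_{i-1}$ to genuine mixed strategies on $\mathcal{T}$. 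With (i)–(iii) in hand, the abstract theorem of \citet{adam2021double} applies verbatim to the reformulated game: every weakly convergent subsequence of $(\pi_i^*,\sigma_i^*)$ converges to an exact equilibrium, and compactness of $\Delta(\mathcal{X})\times\Delta(\mathcal{T})$ (again Prokhorov, since $\mathcal{X}$ and $\mathcal{T}$ are compact metric) guarantees such a subsequence exists. Finally, Proposition~\ref{theorem:mixed} lets me conclude the defender marginal of any such limit is an equilibrium strategy for the original problem Eq.~\eqref{eq:DRO}.

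The main obstacle I anticipate is the topological bookkeeping on the space of distributions: ensuring that $\mathcal{T}$ (and $\Delta(\mathcal{T})$) is genuinely compact in the right topology and that $\tilde u$ is jointly continuous there, so that the black-box theorem of \citet{adam2021double} — which is stated for abstract compact pure-strategy spaces — truly applies. In particular the weak lower semicontinuity of KL divergence and the interchange of the $\mathbf{x}$-limit and the $\tau$-expectation need to be handled carefully; once those are established the rest is a direct citation. A secondary subtlety is making the $N\to\infty$ limit rigorous at the level of sequences of iterates (the almost-sure convergence in Proposition~\ref{thm:twisted_diffusion_sampler} is per-iteration, so one should either fix a diagonal/uniform argument over the countably many iterations or simply state the result in the idealized exact-oracle regime, which is what the theorem does).
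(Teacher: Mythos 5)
Your proposal is correct and follows essentially the same route as the paper: with $N\to\infty$ the utility estimates are exact (Proposition~\ref{thm:twisted_diffusion_sampler}), and the statement then follows by applying the double oracle convergence result of \citet{adam2021double} to the reformulated continuous zero-sum game $(\mathcal{X},\mathcal{T},U)$, whose compactness and continuity the paper asserts under Assumptions~\ref{assump:concavity} and~\ref{assump:full_support}. Your additional verification of those hypotheses (weak compactness of $\mathcal{T}$ via Prokhorov and lower semicontinuity of the KL divergence, joint continuity of the payoff) is more explicit than the paper's treatment but does not change the argument.
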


However, in practical scenarios where only a finite number of samples is available, the estimation of the expected utility is imprecise. Consequently, estimation errors will appear in the following steps within each iteration of our algorithm: (1) solving the subgame, (2) computing the defender oracle, and (3) evaluating the terminating condition.

\begin{restatable}{theorem}{convergenceinfround}\label{thm:convergence_inf_round}
Under assumptions \ref{assump:concavity} and \ref{assump:full_support}, with finite number of samples at the $i$-th iteration 
 $$N_i = \left\lceil CM^2(i+1)^2 i^{1+\delta}/\epsilon^2 \right\rceil, $$
for each adversary distribution, where $C$ is a constant, $M$ is the upper bound of utility function, $\epsilon$ is the approximation error, and $\delta$ is any positive number. 
\begin{itemize}
    \item \textbf{Item 1:} Without terminating condition, every weakly convergent subsequence of Alg.~\ref{alg:double-oracle} converges to an $\epsilon$-equilibrium in a possibly infinite number of iterations almost surely. Such a weakly convergent subsequence always exist. 
    % \ak{Perhaps we can define a ``weakly convergent subsequence of an algorithm''?}
    \item \textbf{Item 2:} With the terminating condition, Alg.~\ref{alg:double-oracle} terminates in a finite number of iterations almost surely. Also, it converges to a finitely supported $4\epsilon$-equilibrium with probability at least $1-\mathsf{prob}$.
\end{itemize}
\end{restatable}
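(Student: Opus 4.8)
The plan is to combine the convergence structure of the idealized double oracle (Theorem~\ref{cor:double_oracle_infinite}) with a uniform control on the Monte Carlo estimation error introduced by using $N_i$ samples per adversary distribution at iteration $i$. The key observation is that the only places finite sampling enters are: (a) the payoff entries of the subgame matrix $\{\mathcal X_{i-1},\hat{\mathcal T}_{i-1},U\}$, which are estimated from the empirical $\hat\tau_l$; (b) the defender oracle in Eq.~\eqref{eq:defender_oracle}, which optimizes against $\hat\sigma^*_{i-1}$ supported on empirical distributions; and (c) the gap $\bar v_i-\underline v_i$ checked in the terminating condition. So the first step is to quantify, for a fixed pure-strategy pair $(\mathbf x,\tau_l)$, the deviation $|U(\mathbf x,\hat\tau_l)-U(\mathbf x,\tau_l)|$ in terms of $N_l$. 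Since $u(\mathbf x,\mathbf z)\in[0,M]$, $U(\mathbf x,\hat\tau_l)$ is an average of $N_l$ bounded terms with mean $U(\mathbf x,\tau_l)$ (the twisted SMC estimator being asymptotically exact by Proposition~\ref{thm:twisted_diffusion_sampler}; for the non-asymptotic bound I will invoke the standard SMC concentration / McDiarmid-type inequality on the bounded weighted particle estimator), so Hoeffding gives $\Pr[|U(\mathbf x,\hat\tau_l)-U(\mathbf x,\tau_l)|>t]\le 2\exp(-c N_l t^2/M^2)$.

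**Next I would** union-bound this over the relevant strategies. At iteration $i$ there are $i+1$ adversary distributions $\tau_0,\dots,\tau_i$, each needing accuracy against the current defender mixed strategy (and, for the oracle steps, against a finite defender strategy set of size $i+1$). Choosing the target accuracy at iteration $i$ to be $\epsilon_i \propto \epsilon/((i+1)\sqrt{\text{poly}(i)})$ and plugging $N_i=\lceil CM^2(i+1)^2 i^{1+\delta}/\epsilon^2\rceil$ into the Hoeffding bound yields a per-iteration failure probability summable in $i$ (the $i^{1+\delta}$ factor is exactly what makes $\sum_i i^{-(1+\delta)}<\infty$). By Borel–Cantelli, almost surely only finitely many iterations have estimation error exceeding the target; equivalently, almost surely there is a (random) iteration after which all utility estimates are within the prescribed tolerance. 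On that almost-sure event, the subgame payoffs, the defender oracle values, and the terminating gap are all within $O(\epsilon)$ of their exact counterparts.

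**For Item 1**, on this almost-sure event the algorithm run is an $\epsilon$-perturbation of the exact double oracle analyzed in Theorem~\ref{cor:double_oracle_infinite}. I would re-run that argument tracking the $\epsilon$-slack: a weakly convergent subsequence still exists by the same compactness argument (Assumption~\ref{assump:full_support} gives tightness of the adversary strategies over a compact pure-strategy space, $\mathcal X$ is compact), and the limit point satisfies the equilibrium inequalities up to an additive $\epsilon$ because the best-response/value gaps that drive the original proof are each inflated by at most $O(\epsilon)$ — hence the limit is an $\epsilon$-equilibrium, almost surely. For Item 2, I would show the terminating condition fires in finite time: the exact gap $\bar v_i-\underline v_i\to 0$ along the convergent subsequence (this is the double-oracle termination mechanism), the estimation error is eventually $<\epsilon$ by the Borel–Cantelli step, and the clock condition $i>1/(16\,\mathsf{prob})$ is automatically met for large $i$; so almost surely some iteration satisfies $\bar v_i-\underline v_i\in(-2\epsilon,2\epsilon)$. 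When it stops at such an $i$, the estimated gap being within $2\epsilon$ plus estimation error $<\epsilon$ on each of $\underline v_i,\bar v_i$ plus the subgame-solution slack $<\epsilon$ accumulates to a $4\epsilon$ bound on the true exploitability of $\pi^*_{i-1}$, i.e.\ a $4\epsilon$-equilibrium. The probability statement comes from \emph{not} invoking Borel–Cantelli but instead directly bounding: with the chosen $N_i$, the probability that the estimation error ever exceeds $\epsilon$ at the (finitely many, but a priori unbounded) decision points is at most $\sum_i (\text{per-iter fail prob}) \le \mathsf{prob}$ by the same summability, combined with the $i>1/(16\,\mathsf{prob})$ guard which ensures the early iterations (where the union bound is loosest) are excluded from the stopping region — this is where the specific constant $16$ and the $(i+1)^2 i^{1+\delta}$ shape of $N_i$ get pinned down.

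**The main obstacle** I anticipate is not the Borel–Cantelli / Hoeffding bookkeeping but rather two subtler points: first, justifying the non-asymptotic concentration of the twisted SMC estimator (Proposition~\ref{thm:twisted_diffusion_sampler} is only stated asymptotically, so I need either a stated SMC $L^2$/exponential bound under the regularity conditions or to treat $\hat\tau_i$ as i.i.d.\ draws from $\tau_i$ for the purpose of the bound); and second, propagating the perturbation through the \emph{minimax} subgame solve — showing that an $\epsilon$-accurate payoff matrix yields an $O(\epsilon)$-accurate equilibrium value and that the resulting $(\pi^*_{i-1},\hat\sigma^*_{i-1})$ is an $O(\epsilon)$-equilibrium of the true subgame — which requires the standard but careful LP-stability / value-function Lipschitzness argument for matrix games. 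Everything else is a matter of choosing the constant $C$ so the summed failure probabilities land below $\mathsf{prob}$.
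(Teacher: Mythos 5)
Your overall architecture matches the paper's proof: per-cell concentration of the estimated payoff entries, a union bound over the $(i+1)^2$ cells of the payoff matrix at iteration $i$, Borel--Cantelli with the summable $i^{-(1+\delta)}$ failure probabilities to get an almost-sure finite cutoff index, treating the pre-cutoff strategies as the initial sets and re-running the continuous-game double oracle argument of \citet{adam2021double} with $\epsilon/4$ slack per utility comparison (which is exactly how the paper gets Item~1), and then a termination accounting for Item~2. Two points, however, deviate from what actually works. First, the concentration step: Hoeffding does not apply directly, because $U(\mathbf{x},\hat\tau_l)$ is a self-normalized, resampled weighted-particle estimator, not an average of i.i.d.\ bounded draws from $\tau_l$. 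The paper's route is the one you list only as a fallback: the full (appendix) version of Proposition~\ref{thm:twisted_diffusion_sampler} gives a finite-sample $L^2$ bound $\mathbb{E}\bigl[|U(\mathbf{x},\tau)-U(\mathbf{x},\hat\tau)|^2\bigr]\le C'M^2/N$ (via Propositions 11.3/11.5 of \citet{chopin2020introduction}), and Chebyshev then yields $P(|\Delta_i^{j,k}|\ge \epsilon/4)\le 16C'M^2/(N_i\epsilon^2)$; with $C=16C'$ and the union bound this gives $P(\Delta_i\ge\epsilon/4)\le i^{-(1+\delta)}$, which is all that is needed. Your i.i.d.\ treatment of $\hat\tau_i$ would not be justified, so you must commit to the SMC moment bound. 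Also, your worry about LP-stability of the subgame solve is unnecessary: since $(\pi_{i-1}^*,\hat\sigma_{i-1}^*)$ is an \emph{exact} equilibrium of the estimated subgame and $|U(\pi,\sigma)-U(\pi,\hat\sigma)|\le\Delta_i$ for any mixed pair sharing weights, the equilibrium inequalities transfer to the true subgame with additive $\Delta_i$ slack directly; no perturbation analysis of the LP is required.

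Second, and this is the genuine gap, your probability accounting for Item~2 does not close. You propose bounding the probability that the estimation error ever exceeds the tolerance at any iteration past the guard by $\sum_{i>1/(16\,\mathsf{prob})} (\text{per-iteration failure probability})\le\mathsf{prob}$. With the prescribed $N_i$ the per-iteration bound is of order $1/(16\, i^{1+\delta})$, and the tail sum beyond $1/(16\,\mathsf{prob})$ is of order $(16\,\mathsf{prob})^{\delta}/(16\,\delta)$, which for small $\delta$ can far exceed $\mathsf{prob}$; so the claimed union bound fails and the constant $16$ cannot be ``pinned down'' this way. The paper instead needs the high-probability statement only at the single (random but finite) stopping index $j$: the stopping rule enforces $j>1/(16\,\mathsf{prob})$, hence $P(\Delta_j\ge\epsilon)\le 1/(16\, j^{1+\delta})\le\mathsf{prob}$, and on the complementary event the chain of inequalities $U(\pi_j,\sigma_j)\le\min_\tau U(\pi_j,\tau)+2\Delta_j+2\epsilon$ and $U(\pi_j,\sigma_j)\ge\max_{\mathbf{x}}U(\mathbf{x},\sigma_j)-2\Delta_j-2\epsilon$ (using the $2\epsilon$ stopping gap, the two oracles' best-response properties, and the estimation error at most twice each) yields the finitely supported $4\epsilon$-equilibrium. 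Almost-sure finite termination itself comes from Item~1 (the true gap eventually enters the window) together with the almost-sure cutoff, as you sketched; only the $1-\mathsf{prob}$ quantification needs to be localized to the stopping iteration rather than unionized over all of them.
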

\begin{proof}
 We provide a sketch of the proof here and defer the full details to Appendix~\ref{appdx:convergence_inf_round}. The key steps for proving Item~1 are as follows:

\begin{itemize}
    \item \textbf{Step 1:} We bound the utility estimation error for any mixed strategy pair at iteration \(i\) by the maximum estimation error over all entries in the payoff matrix.
    \item \textbf{Step 2:} We show that, under our finite sampling scheme, almost surely, the realized sequence of payoff matrices
    has a finite cutoff index after which all subsequent iterations have maximum cell-wise errors below $\epsilon/4$.
    % , the probability that the maximum cell-wise error exceeds \(\epsilon/4\) is nonzero only during the first \(i_r\) iterations, for some finite \(i_r\).
    \item \textbf{Step 3:} 
      % We treat the strategies generated in the first \(i_r\) rounds 
    We treat the strategies generated during the finite initial phase (before the cutoff index ensured by Step 2) as the initial strategy set in the standard Double Oracle (DO) algorithm~\citep{adam2021double}. We then adapt the original convergence proof to account for the error introduced by finite sampling, which is now bounded by \(\epsilon/4\).
\end{itemize}

By relaxing the error bound in Item~1, we obtain convergence within a finite number of iterations almost surely. The additional approximation error in Item~2 stems from two sources: (1) enforcing finite termination, and (2) using estimated utilities of mixed strategy pairs when evaluating the stopping condition.

\end{proof}
In practice, we use a fixed number of samples across iterations, and experiments in Section~\ref{sec:experiments} shows our framework still achieves robust performance.

\begin{figure*}
    \centering
    \includegraphics[width=0.95\linewidth]{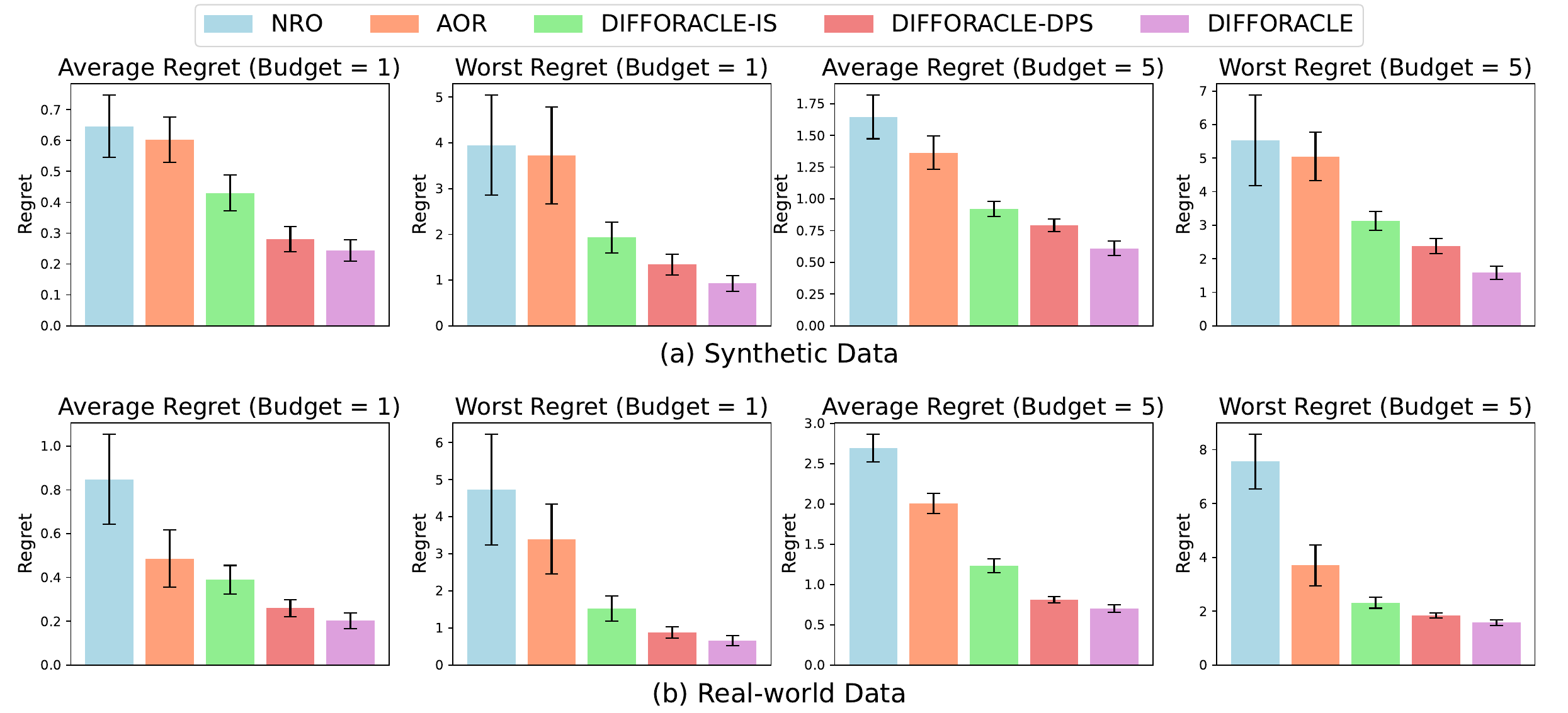}
    \caption{Experimental Results on both synthetic and real-world datasets. Following \citep{ho2020denoising}, we average the results over 5 random seeds.}
    \label{fig:exp}
\end{figure*}

\section{Experiments}\label{sec:experiments}

\subsection{Experimental Setup}

\paragraph{Datasets}
We conduct experiments on both synthetic and
real-world datasets which we describe below. We use
a graph based dataset~\citep{nguyen2016capture} to reflect geospatial constraints
in the poaching domain for patrollers.

 \textbf{Synthetic data.} 
Poaching counts are sampled from a Gamma distribution parameterized by shape and scale values. To determine the shape parameter, we randomly select one of two Graph Convolutional Networks (GCNs)~\citep{kipf2022semi} with randomly initialized weights to map the node’s feature vector to a continuous value, which serves as the shape parameter. The scale parameter is set to 1 if the first GCN is chosen and 0.9 if the second is selected. Finally, adversarial noise inversely proportional to the poaching count is added, ensuring that nodes with lower poaching counts receive higher noise levels.%Counts are clipped to [0, 40] and finally scaled by 0.2 to resemble the distribution observed in real-world data.

% To mimic the real-world setting, regions are connected with each other according to topology. We randomly generate $5100$ graphs with $30$ nodes and $20$ edges. The first $5000$ of them are used as train set, and the rest $100$ are used as test set. For each node, we randomly generate a $10$ dimensional node feature. The next step is to construct a random map from a node's feature to its poaching count, modeling the complex real-world connection between them. We sample a node's poaching count from gamma distribution, which has shape and scale parameters. We initialize two Graph Convolutional Networks (GCN). For each node, we randomly select one of the GCNs to map the node feature to a shape parameter with equal probability. We scale the shape parameter by a factor of $20$. We then sample the node's poaching count from a gamma distribution. If the first GCN is chosen, the scale is set to $1$, and if the second GCN is chosen, the scale is set to $0.9$. Noises on poaching count are then adversarially added: nodes with smaller poaching count will have a higher noise. We cap the poaching count on each node to the range of $[0,40]$, and finally we scale poaching counts on every node by $0.2$ to make the overall poaching count distribution similar to those in real-world data.

\textbf{Real-world Data.} We use poaching data from Murchison Falls National Park (MFNP) in Uganda, collected between 2010 and 2021. The protected area is discretized into 1 × 1 km grid cells. For each cell, we measure ranger patrol effort (in kilometers patrolled) as the conditional variable for the diffusion model, while the monthly number of detected illegal activity instances serves as the adversarial behavior. Following~\citet{basak2016abstraction}, we represent the park as a graph to capture geospatial connectivity among these cells. To focus on high-risk regions, we subsample 20 subgraphs from the entire graph. Specifically, at each month we identify the 20 cells with the highest poaching counts. Each of these cells is treated as a central node, and we iteratively add the neighboring cell with the highest poaching count until the subgraph reaches 20 nodes. This process generates 532 training, 62 validation, and 31 test samples.

\textbf{Baselines}
We compare the following methods:

\textbf{Non-robust Optimization (NRO).} We use a baseline that directly maximizes the expected utility under the pre-trained diffusion model. The stochastic optimization is solved via sample average approximation, using samples from the diffusion model in conjunction with mirror ascent. Since this approach yields only a pure strategy, we repeat the process with different initializations to obtain five distinct pure strategies. These pure strategies are then combined into a mixed strategy by assigning them equal probability.  

\textbf{Alternate Optimization with Random Reinitialization (AOR).} This method solves the DRO problem using alternating optimization without employing the double oracle framework. It iterates between optimizing the defender's strategy and sampling from the worst-case distribution using twisted SMC. Similar to NRO, we construct a mixed strategy by running the procedure five times with different initializations, generating multiple pure strategies that are then combined with equal probability.  

We also compare against three variations of \ours:  

\textbf{\ours with Importance Sampling (\ours-IS).} This variant replaces the twisted SMC sampler in Section~\ref{sec:twisted_sampler} with importance sampling to sample from Eq.~\ref{eq:bo-poacher}. As the proposal distribution, we directly use the pre-trained diffusion model, $p_{\theta}(\mathbf{z}|\mathbf{c})$.  

\textbf{\ours with Diffusion Posterior Sampling (\ours-DPS).} This variant employs the diffusion posterior sampler~\citep{chung2023diffusion} instead of the twisted SMC sampler in Section~\ref{sec:twisted_sampler}.   

\textbf{\ours.} This version retains the default twisted SMC sampler in Section~\ref{sec:twisted_sampler} to sample from Eq.~\ref{eq:bo-poacher}.

\textbf{Evaluation metrics}
We evaluate the methods using decision \textit{regret}, defined as the difference between the defender's best possible utility under the true adversarial behavior and the expected utility under the optimized mixed strategy. We report both the average regret on the test set and the worst-case regret on the test set.

\textbf{Implementation details}
We employ a three-layer GCN with a hidden dimension of 128 as the backbone of the diffusion model. The optimizer used is Adam~\citep{kingma2014adam} with a learning rate of $10^{-3}$. We use $500$ samples to estimate the expected utility for all the menthods. $\gamma$ is selected on the validation set based on the average regret. More details of the implementation details are provided in Appendix~\ref{sec:exp-details}.

%The actions of the poacher and ranger in grid $j$, represented by $z_j$ and $x_j$ respectively, influence the wildlife population in the area. We model the wildlife population in grid $j$ as follows: 
% $$\max(N_0(j)e^r-\alpha e^{\beta z_j - \theta x_j}, 0),$$
% where $N_0(j)$ is the initial wildlife population in the area and $r$ denotes the natural growth rate of the wildlife. The parameter $\alpha$ captures the impact of both the ranger’s and poacher’s actions on the wildlife population, $\beta$ reflects the strength of poaching, and $\theta$ measures the effectiveness of patrol effort. The utility for the ranger is then represented as the sum of wildlife population across all grids:
% $$U(x, z) = \sum_{j=1}^K \max(N_0(j)e^r-\alpha e^{\beta z_j - \theta x_j}, 0)$$

\subsection{Experimental Results}

\paragraph{Main Results.} We evaluate our method against baselines on both synthetic and real-world poaching datasets under different patrol budgets (\(B=1\) and \(B=5\)). The results, presented in Fig.~\ref{fig:exp}, show that \ours consistently achieves the lowest average regret and worst-case regret across all settings.  

Compared to NRO, \ours reduces average regret by \(62.2\%\), \(62.9\%\), \(73.3\%\), and \(74.0\%\) across different datasets and budgets. Similarly, worst-case regret is significantly reduced by \(59.1\%\), \(64.9\%\), \(71.3\%\), and \(79.0\%\). These improvements highlight the robustness of our approach, which is particularly crucial in green security domains, where minimizing worst-case outcomes is essential for high-stakes decision-making.  

The double-oracle framework provides substantial benefits, as all three variants of \ours significantly outperform the naive robust optimization approach, AOR. This is because AOR relies on a simple heuristic to solve the minimax problem and construct the mixed strategy, lacking convergence guarantees. Consequently, AOR exhibits greater variance and instability, further underscoring the advantages of employing game-theoretic methods for robust optimization. 

% From a game-theoretic perspective, AOR does not consider mixed strategies during optimization, and an equilibrium of pure strategies is not always guaranteed to exist. 

Among the \ours methods employing different sampling strategies, DPS emerges as the strongest alternative to twisted SMC. However, SMC consistently outperforms DPS, demonstrating statistically significant improvements in five out of eight cases. Furthermore, DPS cannot exactly sample from the target distribution in Eq.\ref{eq:bo-poacher}~\citep{lu2023contrastive}, a critical requirement for ensuring the convergence of the double-oracle framework, as analyzed in Section~\ref{sec:convergence_analysis}.

\begin{figure}
    \centering
    \includegraphics[width=\linewidth]{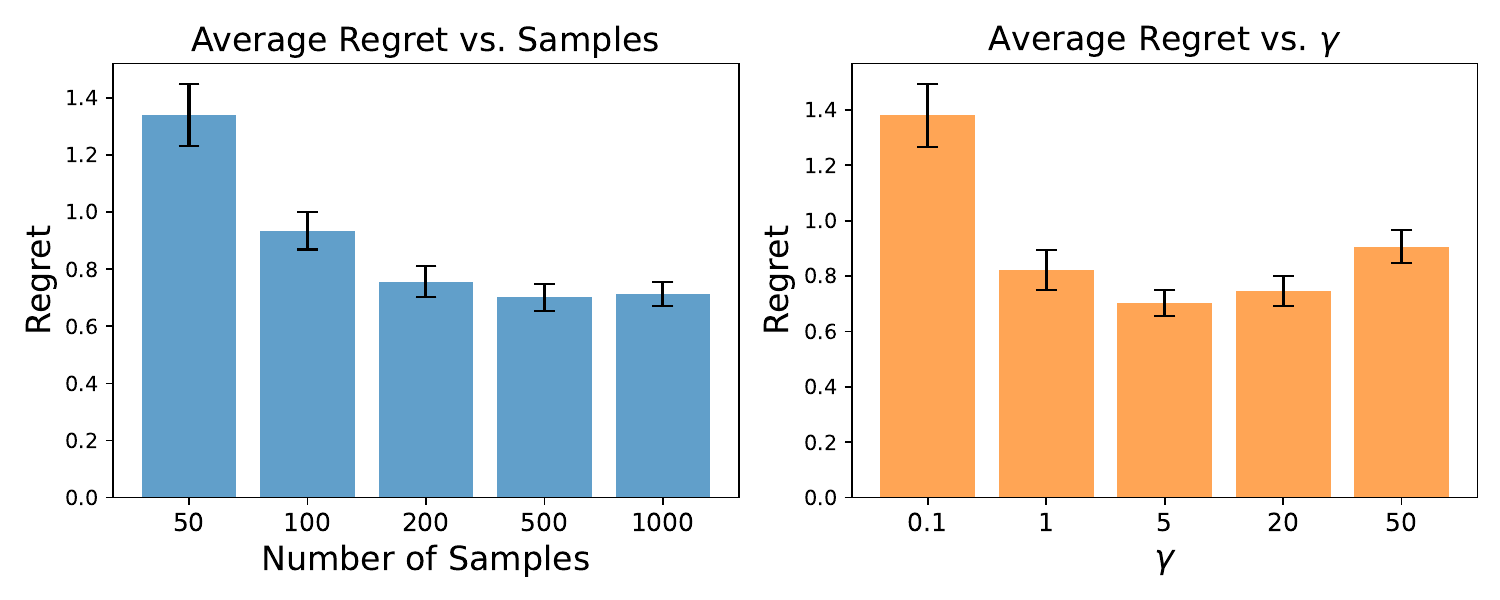}
    \caption{Parameter Study on \ours-SMC on poaching data under Budget 5.}
    \label{fig:parameter}
\end{figure}

\textbf{Parameter Study.} Fig.~\ref{fig:parameter} presents the parameter study of \ours using the twisted SMC sampler. As shown, varying the number of samples in the sampler reveals that once the sample size exceeds 200, performance stabilizes. Additionally, when adjusting the value of $\gamma$, we observe that performance drops significantly as $\gamma$ approaches 0, since it effectively reduces to non-robust optimization. Conversely, when $\gamma$ is too large, performance also declines because the nature adversary may select a worst-case distribution that deviates too far from the learned distribution, making it non-informative.

\section{Conclusion}

We introduced a conditional diffusion model for adversary behavior modeling in green security, overcoming the limitations of traditional Gaussian process and linear models. To the best of our knowledge, this is the first application of diffusion models in this domain.  To integrate diffusion models into game-theoretic optimization, we proposed a mixed strategy of mixed strategies and leverage a twisted Sequential Monte Carlo (SMC) sampler for efficient sampling from unnormalized distributions. We established theoretical convergence to an \(\epsilon\)-equilibrium with high probability using finite samples and finite iterations and demonstrated empirical effectiveness on both synthetic and real-world poaching datasets.  
 Future work could explore extensions to sequential-decision-making.

\section*{Acknowledgements}
We are thankful to the Uganda Wildlife Authority for granting us access to incident data from Murchison Falls National Park.
We also thank Charles Emogor for the insightful discussions and the anonymous reviewers for their valuable feedback. This work was supported by ONR MURI N00014-24-1-2742.

\bibliography{references}

%%%%%%%%%%%%%%%%%%%%%%%%%%%%%%%%%%%%%%%%%%%%%%%%%%%%%%%%%%%%

\newpage
\appendix
\onecolumn

\begin{center}
	{\Large \textbf{Appendix for Robust Optimization with Diffusion Models for Green Security}}
\end{center}

\startcontents[sections]
\printcontents[sections]{l}{1}{\setcounter{tocdepth}{2}}

\section{More Details on Diffusion Models}

A diffusion model \citep{sohl2015deep} is a generative framework composed of two stochastic processes: a \emph{forward} process that progressively adds Gaussian noise to real data, and a \emph{reverse} (or denoising) process that learns to remove this noise step by step. Formally, let \(\mathbf{z}^0 \sim \mathcal{D}\) be a sample from the training dataset.\footnote{We use \(\mathbf{z}^0\) and \(\mathbf{z}\) interchangeably when there is no ambiguity.} The forward diffusion process can be written as
$q(\mathbf{z}^t \mid \mathbf{z}^{t-1}) 
= \mathcal{N}\!\bigl(\mathbf{z}^t;\,\mathbf{z}^{t-1},\,\beta^2 \mathbf{I}\bigr),$
where \(\beta^2\) is the noise variance at each step \(t=1,\dots,T\). As \(T\) becomes large, repeated noising transforms the data distribution into (approximately) pure Gaussian noise:
$q(\mathbf{z}^T) \approx \mathcal{N}(\mathbf{0},\,T \beta^2 \mathbf{I}).$

\textbf{Score-based Approximation.} To invert this process (i.e., to denoise and recover samples from the original data distribution), one can approximate the reverse transition 
\(
q(\mathbf{z}^{t-1} \mid \mathbf{z}^t) 
\)
via the \emph{score function}, \(\nabla_{\mathbf{z}^t} \log q(\mathbf{z}^t)\) when $\beta$ is small. Specifically,
\[
q(\mathbf{z}^{t-1} \mid \mathbf{z}^t) 
\,\approx\, 
\mathcal{N}\!\Bigl(\mathbf{z}^{t-1};\,
\mathbf{z}^t 
+ \beta^2 \,\nabla_{\mathbf{z}^t}\! \log q(\mathbf{z}^t),\,
\beta^2 \mathbf{I}\Bigr).
\]
Here, \(q(\mathbf{z}^t) = \int q(\mathbf{z}^0)\,q(\mathbf{z}^t \mid \mathbf{z}^0)\, d\mathbf{z}^0\), and the gradient \(\nabla_{\mathbf{z}^t}\! \log q(\mathbf{z}^t)\) points toward regions of higher data density. In practice, we do not know \(q(\mathbf{z}^t)\) in closed form, so a neural \emph{score network} \(s_{\theta}(\mathbf{z}^t, t)\) is trained to approximate this gradient via \emph{denoising score matching} \citep{vincent2011connection, ho2020denoising}. Consequently, the learned reverse (denoising) transition becomes
\[
p_{\theta}(\mathbf{z}^{t-1} \mid \mathbf{z}^t) 
= \mathcal{N}\!\Bigl(\mathbf{z}^{t-1};\,
\mathbf{z}^t 
+ \beta^2\, s_{\theta}(\mathbf{z}^t, t),\,
\beta^2 \mathbf{I}\Bigr).
\]
Starting from an initial Gaussian sample \(\mathbf{z}^T \sim \mathcal{N}(\mathbf{0},\, T \beta^2 \mathbf{I})\), iterating this reverse process ultimately recovers samples that approximate the original data distribution.

\textbf{Conditional Extension.} This diffusion framework can be naturally extended to include additional context \(\mathbf{c}\). In a \emph{conditional} diffusion model~\citep{ho2021classifier}, the score network becomes \(s_{\theta}(\mathbf{z}^t, t, \mathbf{c})\), so that at each step the denoising is informed by side information such as class labels, textual descriptions, or other relevant features. This conditional approach enables the generation of samples that match not only the learned data distribution but also the specific context \(\mathbf{c}\), making it particularly useful for tasks in which external conditions strongly influence the underlying data generation process.

Rather than directly estimating the score function \( s_{\theta}(\mathbf{z}^t, t) \), Denoising Diffusion Probabilistic Models (DDPM)~\citep{ho2020denoising} reformulate the learning objective as a \textit{noise prediction} task. This reparameterization leverages the closed-form expression of the forward process:
\[
\mathbf{z}^t = \sqrt{\bar{\alpha}_t} \mathbf{z}^0 + \sqrt{1 - \bar{\alpha}_t} \boldsymbol{\epsilon}, \quad \boldsymbol{\epsilon} \sim \mathcal{N}(0, \mathbf{I}),
\]
where \(\bar{\alpha}_t\) denotes the cumulative product of noise schedules. The training objective becomes recovering the noise \(\boldsymbol{\epsilon}\) that perturbed \(\mathbf{z}^0\) to form \(\mathbf{z}^t\). A neural network \(\boldsymbol{\epsilon}_\theta(\mathbf{z}^t, t)\) is trained to approximate this noise, which corresponds to learning the score function up to a time-dependent scaling:
\[
s_{\theta}(\mathbf{z}^t, t) = - \frac{\boldsymbol{\epsilon}_\theta(\mathbf{z}^t, t)}{\sqrt{1 - \bar{\alpha}_t}}.
\]

Training then reduces to minimizing a simple mean squared error (MSE) loss between the true and predicted noise:
\[
\mathcal{L}_{\text{simple}} = \mathbb{E}_{\mathbf{z}_0,\, \boldsymbol{\epsilon},\, t}\!
  \bigl[\|\boldsymbol{\epsilon} 
    - \boldsymbol{\epsilon}_\theta(\mathbf{z}_t, t, \mathbf{c})\|^2\bigr].
\]

By training this conditional diffusion model on historical poaching data---augmented with contextual features \(\mathbf{c}\)---we learn \( p_\theta(\mathbf{z} \mid \mathbf{c}) \), a powerful and expressive model of poacher behavior. This enables us to capture complex, multimodal patterns of attacker responses, thereby supporting the development of robust patrol strategies discussed earlier. One challenge of this task is that poaching data is often very noisy~\citep{zhuang2023dygen}.

\section{Examples of mixed 
strategy over mixed strategies}\label{appdx:example}
Let us consider a national park with $3$ target regions to protect, and poachers' pure strategies specify how many snares to put in each target region. Two examples of poacher pure strategies could be $\mathbf{z}_1 = (3, 4, 3)$ and $\mathbf{z}_2 = (0, 0, 10)$. Each entry in the pure strategy determines the number of snares a poacher will place in the corresponding target region. Let us denote poachers' pure strategy space as $\mathcal{Z} = \{\mathbf{z}_1, \mathbf{z}_2\}$. 

A mixed strategy $\tau$ is a distribution on the pure strategy space, i.e., $\tau \in \Delta(\mathcal{Z})$. Denote the subset of mixed strategies which satisfy the constraint 
$D_{\rm KL}(\tau(\mathbf{z}) || p_{\theta}(\mathbf{z} | \mathbf{c})) \leq \rho$ as $\mathcal{T}$. One such example $\tau_1$ could be $P(\mathbf{z}_1) = 0.1$ and $P(\mathbf{z}_2) = 0.9$. Another degenerate example of mixed strategy $\tau_2$ could be $P(\mathbf{z}_1) = 0$ and $P(\mathbf{z}_2) = 1$.

A mixed strategy over mixed strategies $\sigma$ is a distribution on the constrained mixed strategy space, i.e., $\sigma \in \Delta(\mathcal{T})$. One example of mixed strategy over mixed strategies $\sigma_1$ could be $P(\tau_1) = 0.1$ and $P(\tau_2) = 0.9$. Another degenerate example $\sigma_2$ could be $P(\tau_1) = 0$ and $P(\tau_2) = 1$.

A mixed strategy over mixed strategies is still a distribution on the original pure strategy space, i.e., $\sigma \in \Delta(\mathcal{Z})$. For example, an alternative way to view $\sigma_1$ could be $$P(\mathbf{z}_1) = P(\sigma_1(\tau_1)) \cdot P(\tau_1(\mathbf{z}_1)) + P(\sigma_1(\tau_2)) \cdot P(\tau_2(\mathbf{z}_1)) = 0.01$$ 
and 
$$P(\mathbf{z}_2) = P(\sigma_1(\tau_1)) \cdot P(\tau_1(\mathbf{z}_2)) + P(\sigma_1(\tau_2)) \cdot P(\tau_2(\mathbf{z}_2)) = 0.99$$
However, it is proven in Proposition $\ref{thm:mixed_over_mixed}$ that all mixed strategy over mixed strategies $\sigma$ satisfy $D_{\rm KL}(\sigma || p_{\theta}(\mathbf{z} | \mathbf{c})) \leq \rho$, which is not generally true for elements in $\Delta(\mathcal{Z})$.

From Section $\ref{sec:double_oracle}$ onward, readers can interpret $\mathcal{T}$ as the pure strategy space and $\sigma$ as a standard mixed strategy. Despite each pure strategy $\tau \in \mathcal{T}$ being a distribution, 
all standard terminologies of game theory remain applicable.

% The pure strategy $z = (3, 4, 3)$ means the poacher will put $3$, $4$, and $5$ snares in each target region, respectively. 

\section{Proof of Proposition \ref{thm:mixed_over_mixed}}\label{appdx:mixed_over_mixed}
% $$D_{\mathrm{KL}}(\tau(\mathbf{z}) \,\|\, p_{\theta}(\mathbf{z}\mid \mathbf{c})) \leq \rho \}$$

We now show that for any $\pi(\mathbf{x})\in \Delta(\mathcal{X})$, 

\begin{align*}
\min_{\tau(\mathbf{z})} \left\{ \mathbb{E}_{\pi(\mathbf{x})}\mathbb{E}_{\tau(\mathbf{z})} \left[ u(\mathbf{x}, \mathbf{z}) \right] : D_{\mathrm{KL}}(\tau(\mathbf{z}) \,\|\ p_{\theta}(\mathbf{z} | \mathbf{c})) \leq \rho \right\}=\\
\min_{\sigma(\tau)} \left\{ \mathbb{E}_{\pi(\mathbf{x})}\mathbb{E}_{\sigma(\tau)} \left(\mathbb{E}_{\tau(\mathbf{z})}\left[ u(\mathbf{x}, \mathbf{z}) \right] \right): D_{\mathrm{KL}}(\tau(\mathbf{z}) \,\|\,  p_{\theta}(\mathbf{z} | \mathbf{c})) \leq \rho \right\}.
\end{align*}
From this, the original theorem follows.

Consider any solution $\tau'(\mathbf{z})$ that attains the minimum on the left-hand side. Define a degenerate distribution over strategies $\sigma'(\tau) = \delta[\tau = \tau']$, i.e., it places all its mass on $\tau'$. Note that $\tau'$ satisfies the divergence constraint on the left, so $\sigma'(\tau)$ will also satisfy the corresponding constraint on the right-hand side. Since the expected value under $\sigma'(\tau)$ matches the value attained by $\tau'$, we have the left side is not smaller than the right side.

Now take any solution $\sigma'(\tau)$ that attains the minimum on the right side. Define
$\tau'(\mathbf{z}) = \mathbb{E}_{\sigma'(\tau)}[\tau(\mathbf{z})]$. 
Because a mixture over mixed strategies is itself a valid mixed strategy in $\Delta(\mathcal{Z})$, $\tau'(\mathbf{z})$ is admissible on the left side.

By the convexity of the divergence measure $D$, we have:
\[
D_{\mathrm{KL}}(\tau'(\mathbf{z})\,\|\ p_\theta(\mathbf{z}| \mathbf{c})) 
= D_{\mathrm{KL}}\bigl(\mathbb{E}_{\sigma'(\tau)}\tau(\mathbf{z}) \,\|\  p_\theta(\mathbf{z}| \mathbf{c})\bigr)
\leq \mathbb{E}_{\sigma'(\tau)}[D_{\mathrm{KL}}(\tau(\mathbf{z}) \,\|\ p_\theta(\mathbf{z}|\mathbf{c})]
\leq \rho.
\]
Here, the first inequality follows from the convexity of $D$, and the second inequality is by the construction of $\sigma'(\tau)$, which satisfies the original constraint on the right side.

Thus, $\tau'(\mathbf{z})$ satisfies the left side constraint and attains the same expected value as $\sigma'(\tau)$. We then obtain that the left side is not larger than the right side.

Combining both parts, we conclude the proof. 

\section{Proof of Proposition~\ref{propo:kl}}\label{appdx:propo_kl}

\begin{proof}
We introduce a Lagrange multiplier \(\alpha\) for the KL-divergence constraint and another multiplier \(\lambda\) for the normalization constraint. The Lagrangian is
\[
\mathcal{L}(\tau,\alpha,\lambda)\;=\;
\int \tau(\mathbf{z}) \Bigl(U(\pi_{i-1}^*, \mathbf{z})\Bigr)\,d\mathbf{z}
\;-\;\alpha \Bigl(D_{\rm KL}(\tau(\mathbf{z})||p_{\theta}(\mathbf{z}|\mathbf{c})) - \rho\Bigr)
\;+\;\lambda \Bigl(\!\int \tau(\mathbf{z})\,d\mathbf{z}-1\Bigr).
\]
By taking the functional derivative of \(\mathcal{L}\) with respect to \(\tau(\mathbf{z})\) and setting it to zero, one obtains
\[
\tau(\mathbf{z})
\;\propto\;
p_{\theta}(\mathbf{z}\mid \mathbf{c})
\exp\!\Bigl(\tfrac{1}{\alpha}\, U(\pi_{i-1}^*,\mathbf{z}))\Bigr).
\]
Defining \(\gamma = -\tfrac{1}{\alpha}\) (where \(\gamma>0\) absorbs constants and signs from the Lagrange approach) gives the closed-form solution
\[
\tau_i(\mathbf{z})
\;\propto\;
p_{\theta}(\mathbf{z}\mid \mathbf{c})
\exp\!\Bigl(-\gamma U(\pi_{i-1}^*,\mathbf{z})\Bigr),
\]
which matches Eq.~\eqref{eq:bo-poacher}. This completes the proof.
\end{proof}

\renewcommand{\theproposition}{5.3}

\section{Proof of Proposition~\ref{thm:twisted_diffusion_sampler}}\label{appdx:twisted_diffusion_model}

We first provide the full statement of Proposition~\ref{thm:twisted_diffusion_sampler} as below.
\begin{proposition}[Twisted SMC] 
Suppose the following conditions hold:
\begin{enumerate}
    \item \(\Phi_t(\mathbf{z}^T)\) and \(\Phi_{t}(\mathbf{z}^t)/\Phi_{t-1}(\mathbf{z}^{t-1})\) are positive and bounded.
    \item For \( t > 0 \), \( \log \Phi_t(\mathbf{z}^t) \) is continuous and has bounded gradients with respect to \( \mathbf{z}^t \).
    \item \( \hat{\beta}^2 > \beta^2 \).
\end{enumerate}

\textbf{Almost Sure Convergence:}  Under these assumptions, as the number of particles \( N \to \infty \), we have
\[
U(\mathbf{x},\hat{\tau}(\mathbf{z})) \to U(\mathbf{x},\tau(\mathbf{z})) \quad \text{almost surely},
\]
where \( \hat{\tau} \) is the empirical distribution returned by Algorithm~\ref{alg:twisted-smc}.

\textbf{Error Bound under Finite Samples:}\label{finite_sample_bound} 
Under these assumptions, the mean squared error of the twisted SMC sampler satisfies the bound:
\[
\mathbb{E}\left[|U(\mathbf{x}, \tau(\mathbf{z})) - U(\mathbf{x}, \hat{\tau}(\mathbf{z}))|^2 \right] \leq \frac{C'M^2}{N},
\]
where $C'$ is a constant and $M$ is the maximum value of the utility function.
\end{proposition}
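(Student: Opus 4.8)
The plan is to treat the twisted SMC procedure as a standard Feynman–Kac particle filter and invoke the classical consistency and $L^2$-error results for such systems (e.g.\ Del Moral, and Chopin--Papaspiliopoulos). First I would verify that the three regularity conditions in the statement are exactly what is needed to put Algorithm~\ref{alg:twisted-smc} into the Feynman--Kac framework: condition~1 (positivity and boundedness of $\Phi_T$ and of the successive ratios $\Phi_t/\Phi_{t-1}$) guarantees that the incremental particle weights $w^t_n$ are bounded above and below away from zero; condition~2 (continuity and bounded gradients of $\log\Phi_t$) ensures the adjusted score $\hat s_\theta$ is well-defined and the proposal kernel $\hat p_\theta(\mathbf z^{t-1}\mid\mathbf z^t,\mathbf c)$ is a genuine Gaussian transition; and condition~3 ($\hat\beta^2>\beta^2$) ensures the Radon--Nikodym derivative $p_\theta(\mathbf z^{t-1}\mid\mathbf z^t,\mathbf c)/\hat p_\theta(\mathbf z^{t-1}\mid\mathbf z^t,\mathbf c)$ is bounded (the variance of the numerator Gaussian is dominated by that of the denominator, so the ratio of densities does not blow up at the tails). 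These three facts together say the potential functions in the Feynman--Kac model are uniformly bounded, which is the standard sufficient condition for both almost sure convergence and a $1/N$ mean-squared error bound for the normalized particle estimate of any bounded test function.

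Next I would carry out the two conclusions. For almost sure convergence: since the integrand $\mathbf z \mapsto u(\mathbf x,\mathbf z)$ is bounded in $[0,M]$ and $\tau(\mathbf z)$ (equivalently, the target of the Feynman--Kac flow at $t=0$) is the correct normalized reweighted distribution of Eq.~\eqref{eq:bo-poacher}, the strong law of large numbers for Feynman--Kac particle models (under bounded potentials) gives $\hat\tau(u(\mathbf x,\cdot)) = \sum_n \frac{w_n^0}{\sum_{n'} w_{n'}^0} u(\mathbf x,\mathbf z_n^0) \to \int u(\mathbf x,\mathbf z)\,\tau(\mathbf z)\,d\mathbf z$ almost surely as $N\to\infty$; this is exactly $U(\mathbf x,\hat\tau)\to U(\mathbf x,\tau)$. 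For the finite-sample bound: the standard non-asymptotic $L^2$ estimate for self-normalized Feynman--Kac particle estimators states that for a test function $\varphi$ with $\|\varphi\|_\infty\le M$,
\[
\mathbb E\bigl[|\,\hat\tau(\varphi) - \tau(\varphi)\,|^2\bigr] \;\le\; \frac{C'\,\|\varphi\|_\infty^2}{N}
\]
where $C'$ depends on $T$ and on the uniform bounds on the potentials (obtained from conditions~1--3) but not on $N$. Applying this with $\varphi = u(\mathbf x,\cdot)$ gives the claimed bound $\mathbb E[|U(\mathbf x,\tau) - U(\mathbf x,\hat\tau)|^2] \le C'M^2/N$.

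The main obstacle is the middle step: rigorously checking that the importance weights and density ratios are \emph{uniformly} bounded over the relevant state space so that the off-the-shelf Feynman--Kac bounds actually apply. This is where condition~3 does real work (ratio of Gaussian densities with $\hat\beta^2 > \beta^2$ has a finite supremum over all of $\mathbb R^K$, since the exponent $-\frac{1}{2}\|\mathbf z - \mu_1\|^2/\beta^2 + \frac12\|\mathbf z-\mu_2\|^2/\hat\beta^2$ is bounded above precisely when $1/\hat\beta^2 < 1/\beta^2$), and where condition~2 is needed to control $\Phi_t(\mathbf z^{t-1})/\Phi_t(\mathbf z^t)$ along a propagation step via a Lipschitz-type argument on $\log\Phi_t$ composed with the bounded-displacement Gaussian move. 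One must also confirm that multinomial resampling at every step does not break the bound — it does not, since resampling is unbiased and the classical bounds are stated for the resample-every-step filter. Once uniform boundedness of the potentials is established, both claims follow by directly citing the corresponding theorems for particle approximations of Feynman--Kac flows; I would not reprove those theorems here but would state precisely which hypotheses they require and note that conditions~1--3 supply exactly those hypotheses in our diffusion-model setting.
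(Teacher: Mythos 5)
Your overall route is the same as the paper's: establish that the incremental weights are bounded under conditions 1--3, then invoke off-the-shelf Feynman--Kac/SMC results for bounded test functions to get both almost sure convergence and the $C'M^2/N$ mean-squared-error bound (the paper does exactly this, citing Appendix A.5 of \citet{wu2023practical} for weight boundedness and Propositions 11.3 and 11.5 of \citet{chopin2020introduction} for the two conclusions). Your direct verification of boundedness---the Gaussian density ratio being bounded because $\hat\beta^2>\beta^2$, plus Lipschitz control of $\log\Phi_t$ along the bounded-variance move---is a reasonable, somewhat more self-contained substitute for the paper's citation, and your remark that positivity of $\exp(-\gamma U)$ with $U\in[0,M]$ underlies condition 1 matches the paper's justification.

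The one substantive omission is that you assert, rather than prove, that the time-$0$ marginal of the twisted flow is the target $\tau$ of Eq.~\eqref{eq:bo-poacher}. This is the first half of the paper's proof and is not automatic: one must check that the proposal path density $p_\theta(\mathbf{z}^T)\prod_t \hat p_\theta(\mathbf{z}^{t-1}\mid\mathbf{z}^t,\mathbf{c})$ multiplied by the product of the specific incremental weights $w^t = \frac{p_\theta(\mathbf{z}^{t}\mid\mathbf{z}^{t+1},\mathbf{c})\,\Phi_t(\mathbf{z}^{t})}{\hat p_\theta(\mathbf{z}^{t}\mid\mathbf{z}^{t+1},\mathbf{c})\,\Phi_{t+1}(\mathbf{z}^{t+1})}$ telescopes to $\tfrac{1}{Z}\,p_\theta(\mathbf{z}^{0:T})\,\Phi_0(\mathbf{z}^0)$, so that, since $\Phi_0(\mathbf{z}^0)=\exp(-\gamma U(\pi,\mathbf{z}^0))$, marginalizing out $\mathbf{z}^{1:T}$ gives exactly $\tau(\mathbf{z}^0)\propto p_\theta(\mathbf{z}\mid\mathbf{c})\exp(-\gamma U(\pi,\mathbf{z}))$. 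Without this identification, the Feynman--Kac theorems you cite only say the particle estimates converge to expectations under whatever distribution the weighted flow targets; identifying that distribution with $\tau$ is precisely what the choice of twisting function and weights was engineered for. The calculation is short (a telescoping product), but it must appear for the proof to be complete.
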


\textbf{Justification of Assumptions:}  
\begin{itemize}
    \item \textbf{Assumption (1):} This holds if \( \exp(-\gamma U(\pi, \mathbf{z})) \) is positive and bounded away from zero. In our green security domain, \( U \) is always positive (as introduced in Section~X), ensuring this condition is automatically satisfied.  
    \item \textbf{Assumption (2):} This is justified by the Appendix A.5 of \citet{wu2023practical}.  
    \item \textbf{Assumption (3):} This can be ensured by selecting a sufficiently large \( \hat{\beta} \).  
\end{itemize}

\begin{proof}
Recall that \( p_{\theta}(\mathbf{z}|\mathbf{c}) \) serves as the prior, while the likelihood is given by \( \exp\left(-\gamma U(\pi, \mathbf{z})\right) \).

We first prove that the marginal distribution of the sampler is \( \tau(\mathbf{z}) \):

\begin{align}
\hat{p}(\mathbf{z}^{0:T}) &= \frac{1}{Z} \left[ p_{\theta}(\mathbf{z}^T) \prod_{t=1}^{T-1} \hat{p}(\mathbf{z}^t|\mathbf{z}^{t-1}) \right] 
\left[ \Phi_T(\mathbf{z}^T) \prod_{t=1}^{T-1} \frac{p_{\theta}(\mathbf{z}^t|\mathbf{z}^{t+1}) \Phi_t(\mathbf{z}^t)}{\hat{p}_{\theta}(\mathbf{z}^t|\mathbf{z}^{t+1}) \Phi_{t+1}(\mathbf{z}^{t+1})} \right] \nonumber \\
&= \frac{1}{Z} \left[ p_{\theta}(\mathbf{z}^T) \prod_{t=1}^{T-1} p(\mathbf{z}^t|\mathbf{z}^{t-1}) \right] 
\left[ \Phi_T(\mathbf{z}^T) \prod_{t=1}^{T-1} \frac{\hat{p}_{\theta}(\mathbf{z}^t|\mathbf{z}^{t+1}) \Phi_t(\mathbf{z}^t)}{\hat{p}_{\theta}(\mathbf{z}^t|\mathbf{z}^{t+1}) \Phi_{t+1}(\mathbf{z}^{t+1})} \right] \nonumber \\
&= \frac{1}{Z} p_{\theta}(\mathbf{z}^{0:T}) 
\left[ \prod_{t=0}^{T-1} \frac{\Phi_t(\mathbf{z}^t)}{\Phi_{t+1}(\mathbf{z}^{t+1})} \right] \Phi_T(\mathbf{z}^T) \nonumber \\
&= \frac{1}{Z} p_{\theta}(\mathbf{z}^{0:T}) \Phi_0(\mathbf{z}^0).
\end{align}

Since \( \Phi_0(\mathbf{z}^0) = \exp(-\gamma U(\pi, \mathbf{z}^0)) \), marginalizing out \( \mathbf{z}^{1:T} \) yields

\[
\hat{p}(\mathbf{z}^0) = \tau(\mathbf{z}^0) \propto p_{\theta}(\mathbf{z}|\mathbf{c}) \exp\left( -\gamma U(\pi, \mathbf{z}) \right),
\]

as desired.

Next, according to Appendix A.5 in \citet{wu2023practical}, under the given assumptions, the importance weights \( w^t \) remain bounded. Consequently, applying Propositions 11.5 and 11.3 from \citet{chopin2020introduction} establishes almost sure convergence and the error bound under finite samples.

\end{proof}

\section{Definition of weak convergence}\label{appdx:weak_convergence}
We directly cite the definition of weak convergence provided in \cite{adam2021double}, and a more detailed discussion of the convergence of probability measures can be seen in \cite{billingsley2013convergence}.

\begin{definition}
    A sequence of mixed strategy $(\pi_i^*)_{i=1}^{\infty}$ in $\Delta(\mathcal{X})$ weakly converges to $\pi^* \in \Delta(\mathcal{X}) $ if
    $$\lim\limits_{i \to \infty}\int_{\mathcal{X}} f(x) d\pi_i = \int_{\mathcal{X}} f(x) d\pi^*$$
    for every continuous function $f: \mathcal{X} \to R$. We use $\pi_i  \Rightarrow \pi^*$ to denote weak convergence.
\end{definition}
\section{Proof of Theorem \ref{thm:convergence_inf_round}}\label{appdx:convergence_inf_round}

% We prove the following Theorem \ref{thm:convergence_inf_round}, and Item 2 in the following theorem is exactly theorem \ref{thm:convergence_finite_round}.

% \begin{restatable}{theorem}{convergenceinfround}\label{thm:convergence_inf_round}
% Under assumptions \ref{assump:concavity} and \ref{assump:full_support}, with finite number of samples at the $i$-th iteration 
%  $$N_i = \left\lceil 16CM^2(i+1)^2 i^{1+\delta}/\epsilon^2 \right\rceil, $$
% where $C$ is a constant in Proposition \ref{thm:twisted_diffusion_sampler}, $\delta$ and $\epsilon$ can be any positive number. 
% \begin{itemize}
%     \item \textbf{Item 1:} Without terminating condition, every weakly convergent subsequence of Algorithm~\ref{alg:double-oracle} converges to an $\epsilon$-equilibrium in a possibly infinite number of iterations. Moreover, such a weakly convergent subsequence always exist. \ak{Perhaps we can define a ``weakly convergent subsequence of an algorithm''?}
%     \item \textbf{Item 2:} With the terminating condition, Alg.~\ref{alg:double-oracle} terminates in a finite number of iterations. Also, it converges to a finitely supported $4\epsilon$-equilibrium with probability at least $1-\mathsf{prob}$.\footnote{The error probability $\mathsf{prob}$ is reflected in the terminating condition of Algorithm~\ref{alg:double-oracle}.}
% \end{itemize}
% \end{restatable}

\convergenceinfround*

The constant $C$ in $N_i$ can be expressed as $16C'$, where $C'$ is the constant in \textbf{Error Bound under Finite Sample} discussed in Appendix \ref{appdx:twisted_diffusion_model}. To prove Theorem \ref{thm:convergence_inf_round}, we first prove the utility estimation error bound for the twisted diffusion sampler.
\begin{lemma}\label{lem:chebyshev}
    Under the same assumptions as Proposition \ref{thm:twisted_diffusion_sampler}, the utility estimation error of the twisted SMC sampler satisfies the bound:

\[
P(|U(\mathbf{x}, \tau(\mathbf{z}))-U(\mathbf{x},\hat{\tau}(\mathbf{z}))| \geq \epsilon) \leq \frac{C'M^2}{N\epsilon^2},
\]
where $C'$ is the constant in \textbf{Error Bound under Finite Sample} of Appendix \ref{appdx:twisted_diffusion_model} and $M$ is the maximum value of the utility function.
\end{lemma}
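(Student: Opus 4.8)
The plan is to derive the tail bound directly from the second-moment (mean-squared-error) estimate already contained in the full statement of Proposition~\ref{thm:twisted_diffusion_sampler}, namely $\mathbb{E}\bigl[|U(\mathbf{x},\tau(\mathbf{z})) - U(\mathbf{x},\hat{\tau}(\mathbf{z}))|^2\bigr] \le C'M^2/N$, where the expectation is over the $N$ particles generated by Algorithm~\ref{alg:twisted-smc}. Given this input, the lemma is essentially a one-line consequence of Markov's inequality applied to the nonnegative random variable $Y^2$, with $Y \coloneqq |U(\mathbf{x},\tau(\mathbf{z})) - U(\mathbf{x},\hat{\tau}(\mathbf{z}))|$.

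Concretely, I would first fix the defender pure strategy $\mathbf{x}$ and the target distribution $\tau(\mathbf{z}) \propto p_\theta(\mathbf{z}\mid\mathbf{c})\exp(-\gamma U(\pi^*_{i-1},\mathbf{z}))$, and recall from the proof of Proposition~\ref{thm:twisted_diffusion_sampler} that $\hat{\tau}$ is the properly weighted empirical measure returned by the twisted SMC sampler, so that $U(\mathbf{x},\hat{\tau})$ is a (consistent, asymptotically exact) estimator of $U(\mathbf{x},\tau)$. Then I would invoke the finite-sample error bound of that proposition to control $\mathbb{E}[Y^2]$, and apply Markov's inequality: for any $\epsilon>0$,
\[
P(Y \ge \epsilon) \;=\; P(Y^2 \ge \epsilon^2) \;\le\; \frac{\mathbb{E}[Y^2]}{\epsilon^2} \;\le\; \frac{C'M^2}{N\epsilon^2},
\]
which is exactly the claimed bound. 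Equivalently, since $\mathbb{E}[U(\mathbf{x},\hat{\tau})] \to U(\mathbf{x},\tau)$ and the estimator variance is $O(M^2/N)$, this is a Chebyshev-type inequality, with bias and variance absorbed into the single constant $C'$.

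There is no real obstacle in this step: the only substantive content is the second-moment bound it rests on, which is not proved here but cited — it follows from the boundedness of the twisted-SMC importance weights (Appendix~A.5 of \citet{wu2023practical}) together with the $L^2$ error estimates for weighted particle approximations in Propositions~11.3 and 11.5 of \citet{chopin2020introduction}. The one point to be careful about is that $C'$ should be taken uniform over the (compact) strategy spaces and over the distributions $\tau$ that actually arise as adversary oracle outputs, since Theorem~\ref{thm:convergence_inf_round} later sums such tail bounds over iterations; Assumption~\ref{assump:full_support} and the boundedness of $U$ in $[0,M]$ are precisely what make such a uniform constant available. The remaining manipulation is routine.
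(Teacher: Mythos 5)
Your proposal is correct and follows essentially the same route as the paper: both derive the tail bound from the mean-squared-error estimate in Proposition~\ref{thm:twisted_diffusion_sampler} via a second-moment (Chebyshev/Markov) inequality. If anything, your application of Markov's inequality to $Y^2$ is the cleaner phrasing of what the paper calls Chebyshev, since it does not require the estimator to be unbiased.
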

\begin{proof}
    Consider the random variable $U(\mathbf{x}, \tau(\mathbf{z}))-U(\mathbf{x},\hat{\tau}(\mathbf{z}))$, whose variance is upper bounded by $E|(U(\mathbf{x}, \tau(\mathbf{z}))-U(\mathbf{x},\hat{\tau}(\mathbf{z}))^2|$. By \textbf{Error Bound under Finite Sample}, we know that this variance is upper bounded by $\frac{C'M^2}{N}$.

    Applying Chebyshev's inequality to the random variable $U(\mathbf{x}, \tau(\mathbf{z}))-U(\mathbf{x},\hat{\tau}(\mathbf{z}))$, we have
    $$P(|U(\mathbf{x}, \tau(\mathbf{z}))-U(\mathbf{x},\hat{\tau}(\mathbf{z})| \geq \epsilon) \leq \frac{C'M^2}{N\epsilon^2}$$
\end{proof}

\paragraph{Notation} We introduce notations used in the proof of Theorem \ref{thm:convergence_inf_round}. Recall at the $i$-th iteration of algorithm \ref{alg:double-oracle}, we use an empirical distribution $\hat{\tau_i}$ with $N_i$ samples to approximate each adversary strategy (distribution) $\tau_i \in \mathcal{T}_i$.  Because of the finite sample approximation, the utility estimation for each pure strategy pair in the payoff matrix is imprecise. Define the estimation error in row $j$, column $k$ of payoff matrix at iteration $i$ as
$$\Delta_i^{j,k} = U(x_j, \tau_k) - U(x_j, \hat{\tau}_k).$$ 
Let $\Delta_i$ denote the absolute value of the largest utility estimation error for any cell in the payoff matrix at the $i$-th iteration of the algorithm, i.e., $\Delta_i = \max_{j,k} |\Delta_i^{j,k}|$. At step 7 of algorithm \ref{alg:double-oracle}, when we apply linear programming to solve the subgame $(\mathcal{X}_{i}, \hat{\mathcal{T}}_{i}, U)$, we obtain a mixed strategy for adversary $\hat{\sigma}^*_{i}$ defined on $\hat{\mathcal{T}}_{i}$. Recall $\sigma^*_i \in \Delta(\mathcal{T}_i)$ is the mixed strategy on the underlying true adversary distribution that shares the same weight as  $\hat{\sigma}^*_i$. Formally, $\sigma^*_i(\tau_l) = \hat{\sigma}^*_i(\hat{\tau_l}) \ \forall l \in [i]$.

\paragraph{Proof of Item 1}
\begin{proof}
We first show that for any $\pi_i \in \Delta(\mathcal{X}_{i})$ and $\sigma_i \in \Delta(\mathcal{T}_{i})$, we have $|U(\pi_i, \hat{\sigma}_i) - U(\pi_i, \sigma_i)| \leq \Delta_i$.
We write 
    \begin{equation}\label{eq:utility_bound_pf}
        |U(\pi_i, \sigma_i) - U(\pi_i, \hat{\sigma}_i)| = \sum_{\mathbf{x} \in \mathcal{X}_i}\sum_{\tau \in \mathcal{T}_i} \pi_i(\mathbf{x}) \cdot \sigma_i (\tau) \cdot |U(\mathbf{x}, \tau) - U(\mathbf{x}, \hat{\tau})|
    \end{equation}
   $|U(\mathbf{x}, \tau) - U(\mathbf{x}, \hat{\tau})|$ denotes the sample estimation error for the pure strategy pair $(\mathbf{x}, \tau)$ in the payoff matrix. The maximum on the right-hand side of Equation \ref{eq:utility_bound_pf} is obtained when putting all the probability mass on the strategy pair with the largest sample estimation error, which is $\Delta_i$.

    Then we bound $\Delta_i$ for each $i$. For any cell $(j,k)$ in the matrix, we apply Lemma \ref{lem:chebyshev}:
    $$P(|\Delta_{i}^{j,k}| \geq \frac{\epsilon}{4}) \leq \frac{16C'M^2}{N\epsilon^2 }.$$
    Since at $i$-th iteration, there are $(i+1)^2$ cells in the payoff matrix, we apply the union bound and obtain:
     $$P(\Delta_i \geq \frac{\epsilon}{4}) \leq \frac{16C'M^2 (i+1)^2}{N_i\epsilon^2}.$$
    By setting $N_i =\left\lceil 16C'M^2(i+1)^2 i^{1+\delta}/\epsilon^2 \right\rceil$, we have 
    $$P(\Delta_i \geq \frac{\epsilon}{4}) \leq \frac{1}{i^{1+\delta}}.$$
    Here we consider the events $A_i=\{\Delta_i \geq \frac{\epsilon}{4}\}$. From the step above, we have $$P(A_i)\leq \frac{1}{i^{1+\delta}}.$$ 
    Because $\delta >0$, $\sum_{i=1}^{\infty} \frac{1}{i^{1+\delta}}$ is a convergent series. Therefore, $\sum_{i=1}^{\infty} P(A_i)<\infty$. From Borel-Cantelli Lemma, we then obtain $P(\limsup\limits_{i \to \infty}A_i)=0$. Let's use $\Omega$ to denote the space of all payoff matrix sequence across iterations of Alg. \ref{alg:double-oracle} without terminating condition, and use $\omega \in \Omega $ to denote a realization. $P(\limsup\limits_{i \to \infty}A_i)=0$ implies that
    $$P(\exists i_r(\omega) \text{ finite s.t. } \Delta_i < \frac{\epsilon}{4} \ \forall i > i_r(\omega)) = 1 $$
Intuitively, this means almost surely, the realized sequence of payoff matrices has a finite cutoff index after which all estimation errors remain below $\epsilon/4$.
%     For any $w$,  $P(\limsup\limits_{i \to \infty}A_i)=0$ implies $A_i$ only happens for finite times \textcolor{blue}{with probability 1}. \textcolor{blue}{For any $w$, there exists $i_r(w)$ that for any $i>i_r(w)$,
% $$P(\Delta_i \geq \frac{\epsilon}{4})=0.$$}

Because of assumption \ref{assump:full_support}, the pure strategy space $\mathcal{X}$ and $\mathcal{T}$ are both compact and $U$ is continuous. Hence, $(\mathcal{X}, \mathcal{T}, U)$ is a two-player zero-sum continuous game, and here are several results that are already proven in \cite{adam2021double} for two-player zero-sum continuous games.
\begin{itemize}
\item Sequences $(\pi_i^*)_{i=1}^{\infty}$ and $(\sigma_i^*)_{i=1}^{\infty}$ have a weakly convergent subsequence, which for simplicity, will be denoted by the same indices. Therefore, $\pi_i^*\Rightarrow \pi^*$ for some $\pi^*$ and $\sigma_i^* \Rightarrow \sigma^*$ for some $\sigma^*$, where $\Rightarrow$ denotes weak convergence.
\item If $\pi_i \Rightarrow \pi$ in $\Delta(\mathcal{X})$ and $\sigma_i \Rightarrow \sigma$ in $\Delta(\mathcal{T})$, then $U(\pi_i, \sigma_i) \to U(\pi, \sigma)$. If $\pi_i \Rightarrow \pi$ in $\Delta(\mathcal{X})$ and $\tau_i \to \tau$ in $\mathcal{T}$, then $U(\pi_i, \tau_i) \to U(\pi,\tau)$.
\item For any $\pi \in \Delta(\mathcal{X})$ we have
        $$\min_{\tau \in \mathcal{T}} U(\pi,\tau) = \min_{\sigma \in \Delta(\mathcal{T})} U(\pi,\sigma)$$
\item The size of initial subset $X_1$ and $Y_1$ can be any finite number.
\end{itemize}
Recall with probability $1$, the realized sequence of payoff matrices has a finite cutoff index after which all estimation errors remain below $\epsilon/4$. From now on, we fix any such realization $\omega$ from this probability $1$ set, and omit the notational dependence on $\omega$ for simplicity. From the proof above, $A_i$ only happens for finite times. Assume $i_r$ is the largest number satisfying that $A_i$ happens. We then treat $(\mathcal{X}_{i_r}, \mathcal{T}_{i_r})$ as the initial set of strategies for both players.  Then our sampling scheme ensures that for any strategy pair $(\pi,\sigma)$ and iteration $i$, we have $|U(\pi_i,\sigma_i) - U(\pi_i,\hat{\sigma}_i)| \leq \Delta_i \leq \epsilon/4$.

Consider any $\mathbf{x}$ such that $\mathbf{x} \in \mathcal{X}_{i_0}$ for some $i_0$. Take an arbitrary $i \geq i_0$, which implies $\mathbf{x} \in \mathcal{X}_i$. Since $(\pi_i^*, \hat{\sigma}_i^*)$ is an equilibrium of the subgame $(\mathcal{X}_i, \hat{\mathcal{T}}_i, U)$, we  get
    $$U(\pi_i^*, \hat{\sigma}_i^*) \geq U(\mathbf{x}, \hat{\sigma}_i^*)$$
    Since $U(\pi_i^*, \sigma_i^*)$ and $U(\pi_i^*, \hat{\sigma}_i^*)$ differ by at most $\frac{\epsilon}{4}$, and $U(\mathbf{x}, \sigma_i^*)$ and $U(\mathbf{x}, \hat{\sigma}_i^*)$ differ by at most $\frac{\epsilon}{4}$, we have
    \begin{align*}
        U(\pi_i^*, \sigma_i^*) + \frac{\epsilon}{2} \geq U(\mathbf{x}, \sigma_i^*) \to U(\mathbf{x}, \sigma^*).
    \end{align*}
    % $$U(p_i^*, q_i^*) + 2\delta \geq U(x, q_i^*) \to U(x, q^*)$$
    Since $U(\pi_i^*, \sigma_i^*) \to U(\pi^*, \sigma^*)$, we have
    \begin{align}\label{eq:for-closed-x}
        U(\pi^*, \sigma^*) + \frac{\epsilon}{2} \geq U(\mathbf{x}, \sigma^*)
    \end{align}
    for all $\mathbf{x} \in \cup \mathcal{X}_i$. Since $U$ is continuous, the previous inequality holds for all $\mathbf{x} \in cl(\cup \mathcal{X}_i)$.

     Fix now an arbitrary $\mathbf{x} \in \mathcal{X}$. Note $\mathbf{x}_{i+1}$ is the best response to $U(\cdot, \hat{\sigma}^*_i)$ (since ranger oracle uses finite sample estimation of payoff matrix), and we have
    $$U(\mathbf{x}_{i+1}, \hat{\sigma}_i^*) \geq U(\mathbf{x},  \hat{\sigma}_i^*)$$

    Because $U(\mathbf{x}_{i+1}, \sigma_i^*)$ and $U(\mathbf{x}_{i+1}, \hat{\sigma}_i^*)$ differ by at most $\epsilon/4$, and $U(\mathbf{x}, \sigma_i^*)$ and $U(\mathbf{x}, \hat{\sigma}_i^*)$ differ by at most $\epsilon/4$, we have
    \begin{align*}
        U(\mathbf{x}_{i+1}, \sigma_i^*) + \frac{\epsilon}{2} \geq U(\mathbf{x}, \sigma_i^*) \to U(\mathbf{x}, \sigma^*)
    \end{align*}
     Since $\mathbf{x}_{i+1} \in \mathcal{X}_{i+1}$ and by compactness of $\mathcal{X}$, we can select a convergence subsequence $\mathbf{x}_i \to \tilde{\mathbf{x}}$, where $\tilde{\mathbf{x}} \in cl(\cup \mathcal{X}_i)$. This allows us to use \ref{eq:for-closed-x} to obtain 
    \begin{align*}
        U(\mathbf{x}_{i+1}, \sigma_i^*) \to U(\tilde{\mathbf{x}}, \sigma^*) \leq U(\pi^*, \sigma^*) +\frac{\epsilon}{2}.
    \end{align*}
    Therefore, for any $\mathbf{x}\in X$,
    \[
    U(\mathbf{x},\sigma^*)\leq U(\pi^*,\sigma^*)+\epsilon.
    \]
    Similarly, we have for any $\tau \in \mathcal{T}$,
    \[
    U(\pi^*,\tau)\geq U(\pi^*,\sigma^*)-\frac{\epsilon}{2}.
    \]
    The two sides are not symmetrical because the best response for the poacher doesn't use the finite sample approximation of payoff matrix, thus having a smaller error.  We then conclude the proof.
\end{proof}

We then show that adding the terminating condition, for any $\epsilon>0$, algorithm \ref{alg:double-oracle} can terminate in a finite number of iterations almost surely. Also, when it stops, it converges to a $4\epsilon$-equilibrium with high probability. 
% \haichuan{I feel like this sentence doesn't add much in the sense that $\epsilon$ itself can be arbitrarily small. Maybe we can change it to "The approximation $\epsilon$ can be set arbitrarily close to $0$ and error probability $\mathsf{prob}$ can be set arbitrarily close to $1$ as input of the algorithm.}\yuqi{removed it. I forgot why I wrote this sentence...}

\paragraph{Proof of Item 2} 
\begin{proof}
    Consider now an infinite game, from Item 1 in Theorem \ref{thm:convergence_inf_round}, we know that $\bar{v}_i - \underline{v}_i\rightarrow \epsilon$. Also, our sampling scheme ensures that for any strategy pair $(\pi,\sigma)$, almost surely, there exists a finite cutoff index after which all subsequent iteration $i$ satisfy $|U(\pi_i,\sigma_i) - U(\pi_i,\hat{\sigma_i})| \leq \Delta_i \leq \frac{\epsilon}{4}$.
    % after some finite rounds $i_r$, we have $|U(\pi_i,\sigma_i) - U(\pi_i,\hat{\sigma_i})| \leq \Delta_i \leq \frac{\epsilon}{4}$ \textcolor{blue}{almost surely}. 
    This indicates that with $\epsilon>0$, the terminating condition will be satisfied after a finite number of iterations almost surely. Assume that the algorithm ends at the $j$-th iteration. This implies
    $$U(\mathbf{x}_{j+1}, \hat{\sigma}^*_j)- U(\pi^*_j, \hat{\tau}_{j+1}) \in (-2\epsilon, 2\epsilon)$$
Then we have
 \begin{align*}
     U(\pi_j,\sigma_j)&\leq U(\pi_j,\hat{\sigma}_j)+ \Delta_j\\
     &\leq U(\mathbf{x}_{j+1},\hat{\sigma}_j)+\Delta_j\\
     &\leq U(\pi_j,\hat{\tau}_{j+1})+\Delta_j+2\epsilon\\
     &\leq U(\pi_j,\tau_{j+1})+2\Delta_j+2\epsilon\\
     &=\arg\min_{\tau} U(\pi_j,\tau)+2\Delta_j+2\epsilon.
 \end{align*}

Here the first and fourth relation follows from the estimation error of $U$. The second one is because $\mathbf{x}_{j+1}$ is the best response for the ranger. The third one is from the terminating condition and the fifth one comes from the best response for the poacher.
Similarly,
 \begin{align*}
     U(\pi_j,\sigma_j)&\geq U(\pi_j,\sigma_{j+1})\\
     &\geq U(\pi_j,\hat{\sigma}_{j+1})-\Delta_j\\
     &\geq U(\mathbf{x}_{j+1},\hat{\sigma}_j)-\Delta_j-2\epsilon\\
     &\geq U(\mathbf{x}^*,\hat{\sigma}_j)-\Delta_j-2\epsilon,\text{ where }\mathbf{x}^*=\arg\max_{\mathbf{x}}U(\mathbf{x},\sigma_j)\\
     &\geq \arg\max_{\mathbf{x}}U(\mathbf{x},\sigma_j)-2\Delta_j-2\epsilon.
\end{align*}
Here the second and fifth relation comes from the estimation error of $U$. The first one is from the best response of the poacher and the fourth one is from the best response of the ranger. The third one follows from the terminating condition.

For $\Delta_j$, we have 
$$P(\Delta_j \geq \epsilon)\leq \frac{1}{16j^{1+\delta}}\leq \mathsf{prob},$$ where the second relation comes from $j>\frac{1}{16\mathsf{prob}}$. Therefore, we show that $(\pi_j,\sigma_j)$ is a $4\epsilon$-equilibrium with probability at least $1-\mathsf{prob}$.
 
\end{proof}

\section{Experimental Details}
\label{sec:exp-details}

\subsection{Datasets}
\label{appendix:data}
\paragraph{Synthetic Dataset} To better reflect real-world conditions, regions are connected based on a predefined topology. We randomly generate 5,100 graphs, each with 30 nodes and 20 edges. The first 4,800 graphs are used for training, the next 200 for validation, and the remaining 100 for testing. Each node is assigned a randomly generated 10-dimensional feature vector. Next, we establish a stochastic mapping from a node’s features to its poaching count, capturing the complex relationships observed in real-world scenarios. Poaching counts are sampled from a Gamma distribution parameterized by shape and scale values.  We randomly initialize two Graph Convolutional Networks (GCNs). For each node, one of the two GCNs is selected with equal probability to map the node’s features to a continuous value, which is then scaled by a factor of 20. This value serves as the shape parameter of the Gamma distribution. The poaching count is then drawn from the Gamma distribution, where the scale parameter is set to 1 if the first GCN is chosen and 0.9 if the second is chosen.  To incorporate adversarial noise, we apply perturbations inversely proportional to the poaching count—nodes with lower poaching counts receive higher noise levels. Finally, the poaching count for each node is capped within the range \([0,40]\) and scaled by 0.2 to align the overall distribution with real-world data.

\paragraph{Real-world Dataset}
We use poaching data from Murchison Falls National Park (MFNP) in Uganda, collected between 2010 and 2021. The protected area is discretized into 1 × 1 km grid cells. For each cell, we measure ranger patrol effort (in kilometers patrolled) as the conditional variable for the diffusion model, while the monthly number of detected illegal activity instances of each cell serves as the adversarial behavior. Following~\citep{basak2016abstraction}, we represent the park as a graph to capture geospatial connectivity among these cells. To focus on high-risk regions, we subsample 20 subgraphs from the entire graph. Specifically, at each time step we identify the 20 cells with the highest poaching counts. Each of these cells is treated as a central node, and we iteratively add the neighboring cell with the highest poaching count until the subgraph reaches 20 nodes. This procedure yields 532 training samples, 62 validation samples, and 31 test samples.

\subsection{Implementation details}
\label{appendix:details}
We use a three-layer Graph Convolutional Network (GCN)~\citep{kipf2022semi} with a hidden dimension of 128 as the backbone of the diffusion model. The diffusion process follows the DDPM framework~\citep{ho2020denoising} with \( T = 1000 \) time steps and a variance schedule from \( 10^{-4} \) to \( 0.02 \). Optimization is performed using Adam~\citep{kingma2014adam} with a learning rate of \( 10^{-3} \), and the model is trained for 5000 epochs. To estimate the expected utility, we draw 500 samples from the diffusion model. All comparison methods run for 30 iterations. The mirror ascent oracle uses a step size of 0.1 and runs for 100 iterations. The step size in the mirror ascent step for the baselines is also 0.1.

The actions of the poacher and ranger in grid $j$, represented by $z_j$ and $x_j$ respectively, influence the wildlife population in the area. We model the wildlife population in grid $j$ as follows: 
$$\max(N_0(j)e^r-\alpha e^{\psi \mathbf{z}_j - \theta \mathbf{x}_j}, 0),$$
where $N_0(j)$ is the initial wildlife population in the area and $r$ denotes the natural growth rate of the wildlife. The parameter $\alpha$ captures the impact of both the ranger’s and poacher’s actions on the wildlife population, $\psi$ reflects the strength of poaching, and $\theta$ measures the effectiveness of patrol effort. The utility for the ranger is then represented as the sum of wildlife population across all grids:
$$U(\mathbf{x}, \mathbf{z}) = \sum_{j=1}^K \max(N_0(j)e^r-\alpha e^{\psi \mathbf{z}_j - \theta \mathbf{x}_j}, 0)$$.

\paragraph{Forecasting Experiments.} 
We use the poaching dataset described in Appendix~\ref{appendix:data}. 
Following~\citet{pmlr-v161-xu21a}, linear regression and Gaussian processes predict the poaching count for each $1\times 1$ km cell individually, using two features: the previous month's patrol effort in the current cell and the aggregated patrol effort from neighboring cells.
For linear regression, we employ the scikit-learn implementation, while for Gaussian processes, we use the GPy library with both the RBF and Matérn kernels. The training procedure for the diffusion model follows Appendix~\ref{appendix:details}, with its support constrained to $[0,3]$. For each test instance, we generate 500 samples and use the mean prediction.  We also attempted to impose constraints on the baseline's output but found that this only degraded its performance.

\end{document}